\def\@fpheader{\mbox{}}
\newtheorem{thm}{Theorem}[section]
\newtheorem{lem}[thm]{Lemma}
\theoremstyle{remark}
\newtheorem{defn}[thm]{Definition}
\newtheorem{rem}[thm]{Remark}
\newtheorem{conj}[thm]{Conjecture}
\newtheorem{conv}[thm]{Convention}
\newcommand{\deltabar}{\delta\hspace*{-0.2em}\bar{}\hspace*{0.2em}}
\DeclareMathOperator{\tr }{tr}
\DeclareMathOperator{\cl }{cl}
\DeclareMathOperator{\setint }{int}
\DeclareMathOperator{\A}{Area}
\renewcommand{\S}{S_{\rm gen}}
\author{Raphael Bousso and Geoff Penington}
\affiliation{Center for Theoretical Physics and Department of Physics,\\
University of California, Berkeley, CA 94720, U.S.A. %and \\
%Lawrence Berkeley National Laboratory, Berkeley, CA 94720, U.S.A.
} 
\emailAdd{bousso@berkeley.edu}
\emailAdd{geoffp@berkeley.edu}
\title{Entanglement Wedges For Gravitating Regions}
\abstract{Motivated by properties of tensor networks, we conjecture that an arbitrary gravitating region $a$ can be assigned a {\em generalized entanglement wedge} $E\supset a$, such that quasi-local operators in $E$ have a holographic representation in the full algebra generated by quasi-local operators in $a$. The universe need not be asymptotically flat or AdS, and $a$ need not be asymptotic or weakly gravitating. 

On a static Cauchy surface $\Sigma$, we propose that $E$ is the superset of $a$ that minimizes the generalized entropy. We prove that $E$ satisfies a no-cloning theorem and appropriate forms of strong subadditivity and nesting. If $a$ lies near a portion $A$ of the conformal boundary of AdS, our proposal reduces to the Quantum Minimal Surface prescription applied to $A$. We also discuss possible covariant extensions of this proposal, although none prove completely satisfactory.

Our results are consistent with the conjecture that information in $E$ that is spacelike to $a$ in the semiclassical description can nevertheless be recovered from $a$, by microscopic operators that break that description. We thus propose that $E$ quantifies the range of holographic encoding, an important nonlocal feature of quantum gravity, in general spacetimes.}
\begin{document}
\maketitle

\section{Introduction}

We seek a theory of quantum gravity that can describe our own universe. One possible approach to this challenge is to analyze the well-known theories~\cite{Maldacena:1997re} that instead describe asymptotically anti-de Sitter space (AdS).  In particular, one would like to identify elements of the AdS/CFT correspondence that have broader validity. The entanglement wedge prescription is a crucial part of the AdS/CFT dictionary: it identifies the spacetime region dual to a given CFT subregion. We will propose a more general prescription that associates an entanglement wedge to arbitrary regions in arbitrary spacetimes.

\subsection{Entanglement Wedge of a Boundary Region}

The Ryu-Takayanagi (RT) proposal~\cite{Ryu:2006bv} was first introduced in the context of the AdS/CFT correspondence, as a method for computing the von Neumann entropy of a stationary CFT state reduced to any spatial region $A$ of the boundary. The proposal was refined to allow for time-dependent states by Hubeny, Rangamani, and Takayanagi~\cite{Hubeny:2007xt}. Quantum corrections were first included by Faulkner, Lewkowycz, and Maldacena~\cite{Faulkner:2013ana}. Engelhardt and Wall~\cite{Engelhardt:2014gca} introduced the present, most powerful formulation of RT in terms of the Quantum Extremal Surface QES$_{\rm min}(A)$ anchored on $A$ with smallest generalized entropy among all such surfaces. This prescription can be applied in the semiclassical regime to any order in $G\hbar$. 

The {\em entanglement wedge}, $\mathrm{EW}(A)$, is defined as the homology wedge, \emph{i.e.}, the causal development of any spatial region bounded by $\mbox{QES}_{\rm min}(A)$ and $A$. There is strong evidence that in a large class of states,\footnote{See~\cite{Hayden:2018khn,Akers:2020pmf} for subtleties and appropriate refinements of the entanglement wedge. Our proposal admits analogous refinements, which may be important for its generalization to time-dependent settings; see Sec.~\ref{sec-covariantdiscussion}.} $\mathrm{EW}(A)$ is precisely the bulk dual to the boundary region $A$~\cite{Wall:2012uf,Jafferis:2015del}.\footnote{See \cite{Bousso:2012sj,Hubeny:2012wa,Czech:2012bh} for early work on this problem.} That is, any local bulk operator in $\mathrm{EW}(A)$ can be implemented, or ``reconstructed,'' by a CFT operator in the algebra associated to the region $A$; and no bulk operator outside $\mathrm{EW}(A)$ can be so implemented. This result is known as entanglement wedge reconstruction.
\begin{figure}[t]
\begin{center}
  \includegraphics[scale=.4]{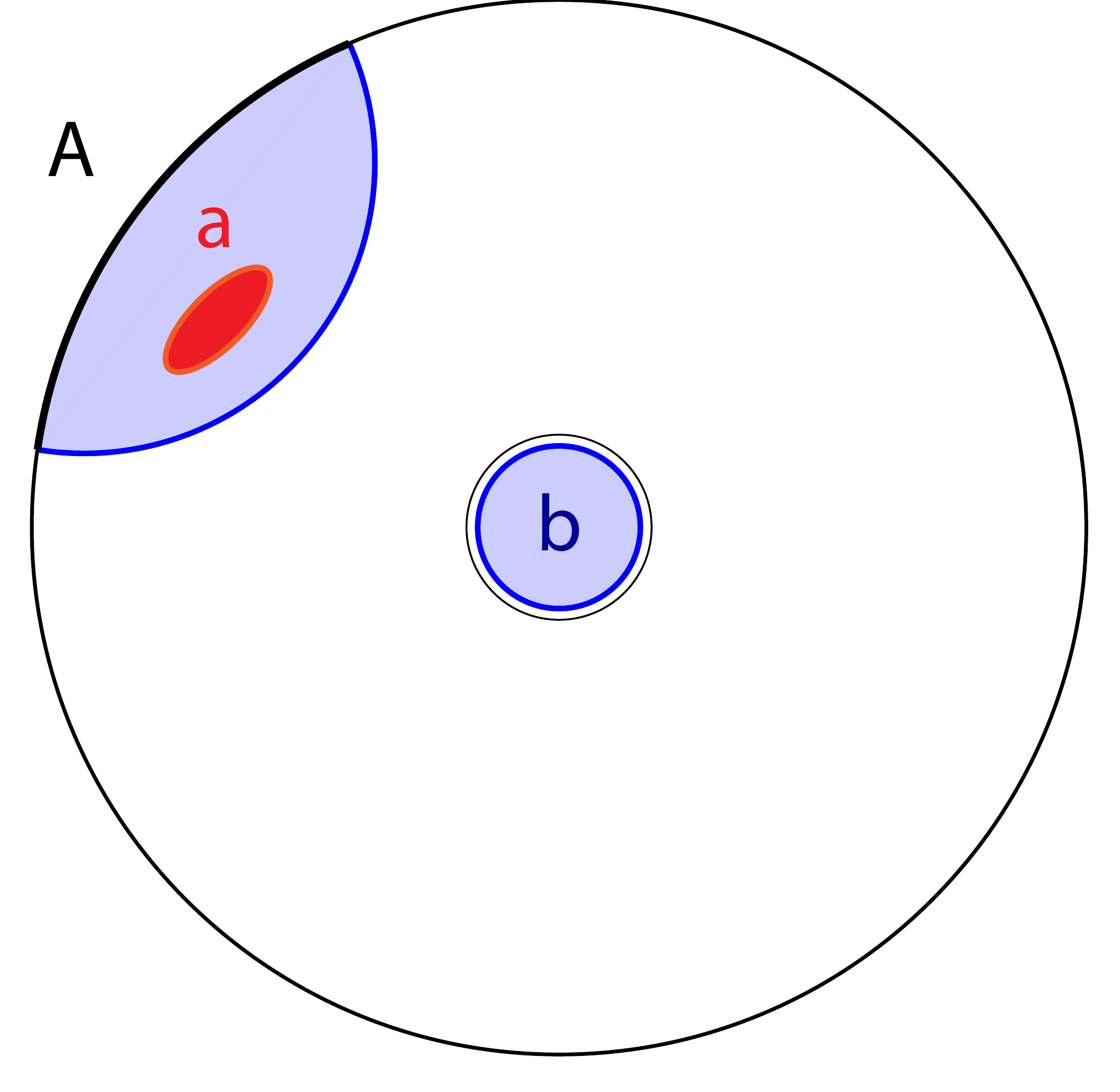} 
\end{center}
\caption{Entanglement island. Hawking radiation emitted from a black hole (grey circle) is collected in the reservoir $a$ (red) located near the boundary region $A$. Before the Page time, the (ordinary) entanglement wedge $EW(A)$ consists only of the blue region adjacent to $A$. After the Page time, $EW(A)$ also contains an island $b$ in the black hole interior.}
\label{fig-island}
\end{figure}

Its definition does not require the entanglement wedge to be connected. Any component of $\mathrm{EW}(A)$ that is not connected to $A$ is called an {\em entanglement island}~\cite{Penington:2019npb,Almheiri:2019psf}. An island arises naturally in the course of black hole evaporation, if the radiation is stored in a distant, localized reservoir $a$~\cite{Bousso:2019ykv}. Consider a small boundary region $A$ whose entanglement wedge contains $a$, as shown in Fig.~\ref{fig-island}. After the Page time, \emph{i.e.}, when the entropy of the radiation exceeds the Bekenstein-Hawking entropy of the remaining black hole, $\mathrm{EW}(A)$ will also include an island $b$ just inside of the black hole horizon. The inclusion of $b$ in $\mathrm{EW}(A)$ decreases the generalized entropy since the radiation entropy in $a$ is purified by the Hawking partners in $b$, and the price paid is only the Bekenstein-Hawking entropy of the island boundary (roughly that of the black hole).

\subsection{Going Beyond AdS}
\label{sec-beyond}

With certain assumptions, the entanglement wedge prescription can be derived from the gravitational path integral~\cite{Lewkowycz:2013nqa}. This implies that the entanglement wedge prescription can be used to compute the entropy not only of subregions of the conformal boundary of AdS, but of more general systems that are coupled to a gravitating spacetime, regardless of whether an exact description is known.\footnote{The earliest example of this phenomenon was the observation that, if a nongravitating external system $\mathsf{R}$ purifies matter near the QES that bounds $\mathrm{EW}(A)$, then access to $\mathsf{R}$ can increase the size of the reconstructible bulk region \cite{Hayden:2018khn, Akers:2019wxj}. In other words, $\mathrm{EW}(A\cup \mathsf{R})$ can differ from $\mathrm{EW}(A)$.}

Crucially, this suggests that the entanglement wedge prescription is not confined to the context of AdS/CFT. Here we propose that an entanglement wedge prescription should be applicable in general spacetimes. Like the Generalized Second Law~\cite{Bekenstein:1972tm,Wall:2011hj} and the Covariant Entropy Bound~\cite{Bousso:1999xy,Bousso:2015mna}, the entanglement wedge reveals aspects of the (usually unknown) full quantum gravity theory, extracted from the gravitational path integral.\footnote{When a full theory \emph{is} available, it can be used to verify the predictions of the entanglement wedge prescription. For example, by the AdS/CFT duality, the Page curve for an evaporating AdS black hole follows independently from the scrambling properties and the unitarity of the CFT.}

The full generality of the entanglement wedge prescription --- its applicability beyond the AdS/CFT correspondence --- was long obscured by the lack of nontrivial bound\-ary-anchored extremal surfaces in non-AdS spacetimes or auxiliary systems. Entanglement islands, however, are detached from the conformal boundary by definition. Hence, islands furnish nontrivial examples of entanglement wedges of systems that have nothing to do with the conformal boundary of AdS.
\begin{figure}[t]
\begin{subfigure}{.48\textwidth}
  \centering
 \includegraphics[width = 0.97\linewidth]{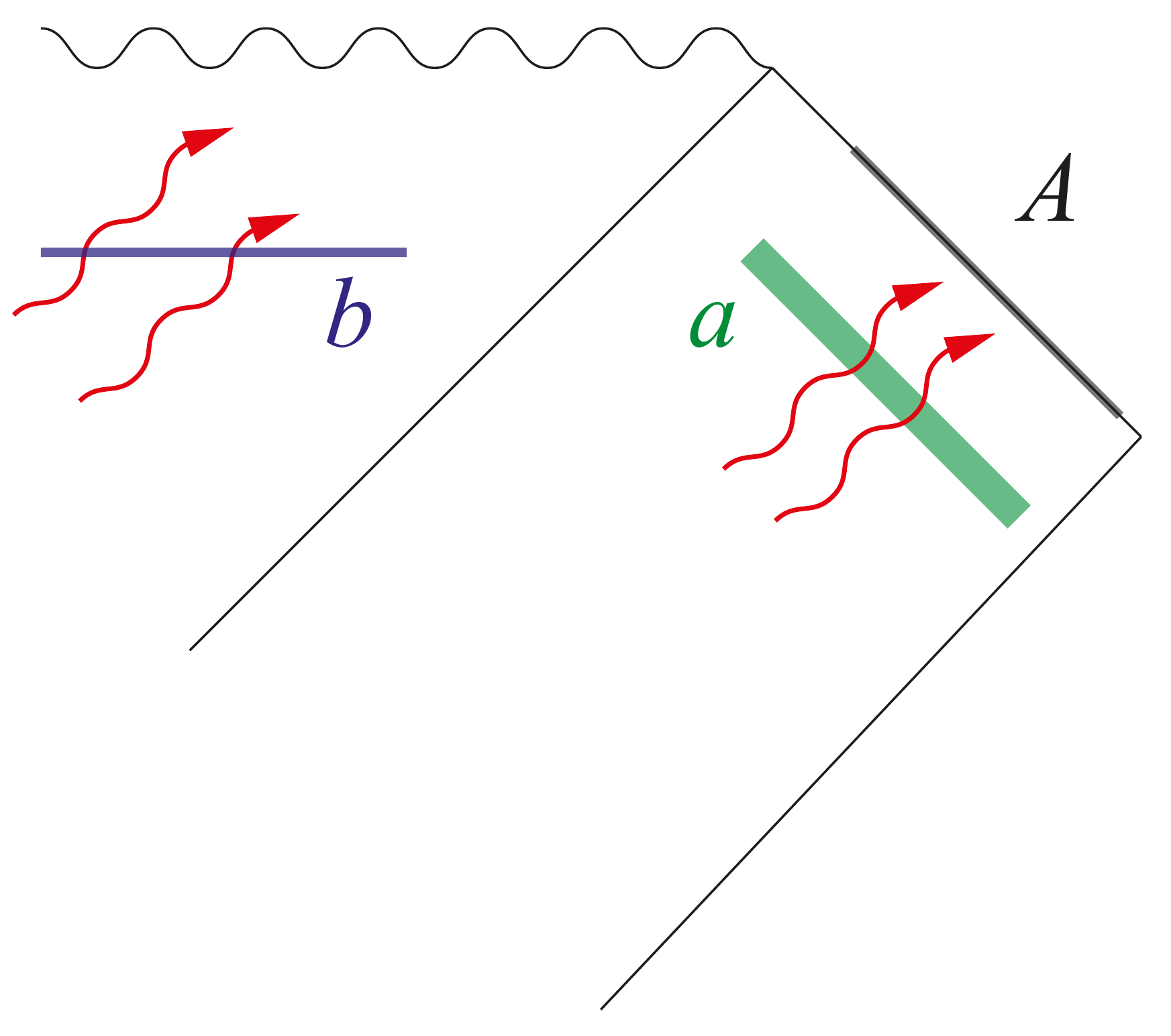}
\end{subfigure}
\begin{subfigure}{.48\textwidth}
  \centering
 \includegraphics[width = 0.95\linewidth]{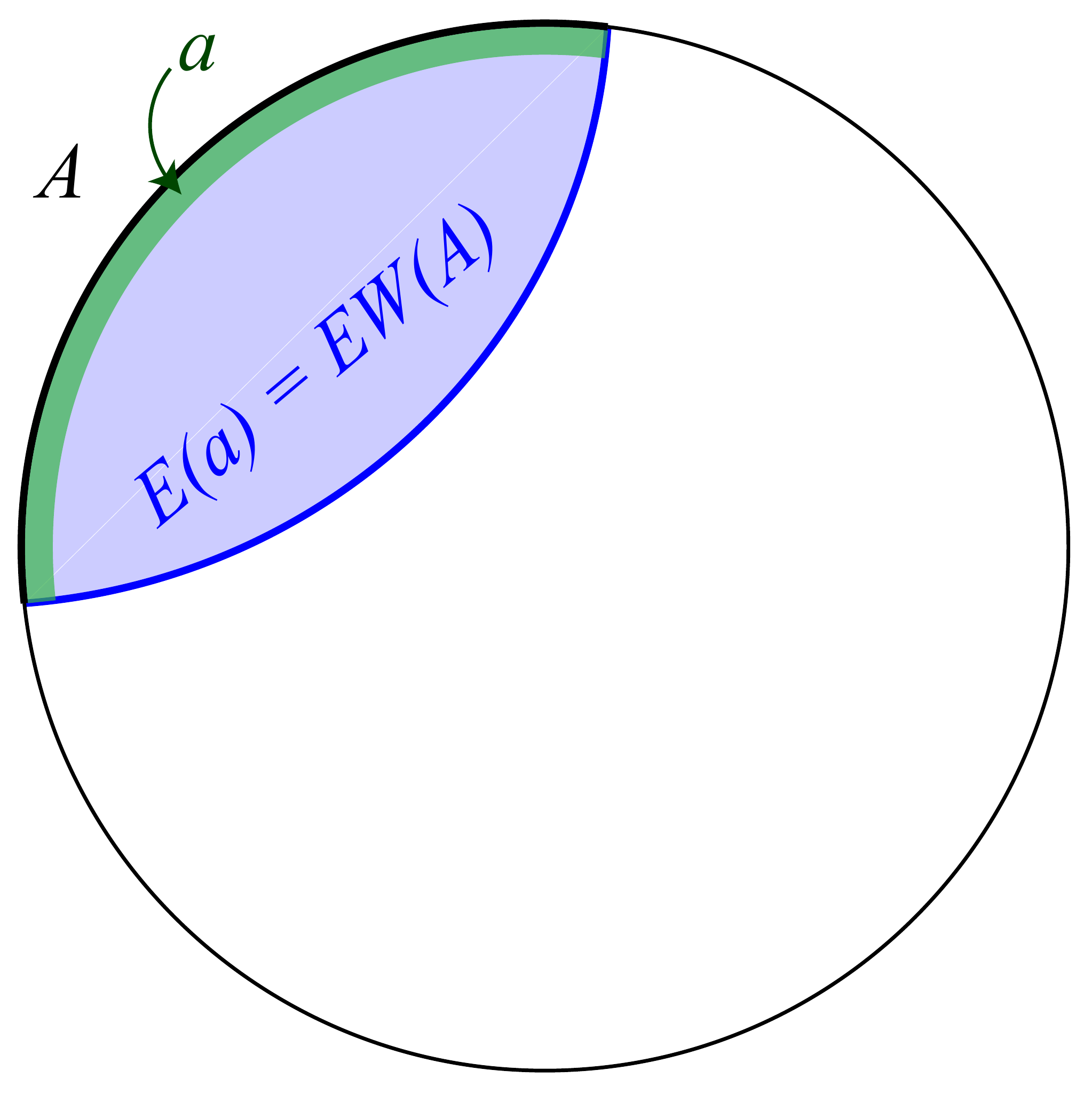}
\end{subfigure}
\caption{Examples that motivate and constrain our definition of a \emph{generalized entanglement wedge}, a map $E: a\to E(a)$, that takes bulk regions as its input. \emph{Left:} Hawking radiation arrives on a portion $A$ of future null infinity. After the Page time, the (original) entanglement wedge of $A$ includes an island: $\mathrm{EW}(A)=A\cup b$. When the radiation is still in the distant spacetime region $a$ (green shaded), it should already possess a generalized entanglement wedge, such that $E(a)\supset a\cup b$.  \emph{Right:} the (original) entanglement wedge $\mathrm{EW}(A)$ of an AdS boundary region $A$  can be regarded as the bulk algebra of operators encoded in the CFT algebra of operators in $A$. The latter is generated by local operators in $A$, which are dual to quasi-local bulk operators in the near-boundary region $a$. Hence $a$ should possess a generalized entanglement wedge such that $E(a)=\mathrm{EW}(A)$.}
\label{fig-motivate}
\end{figure}

For example, if the Hawking radiation is transferred to an external system $\mathsf{R}$, then after the Page time, $\mathsf{R}$ will possess a nontrivial entanglement wedge: $\mathrm{EW}(\mathsf{R}) = \mathsf{R} \cup b$. (Islands were first discovered in this setting.) This should also be the case for Hawking radiation far from the black hole in asymptotically flat spacetimes (see Fig.~\ref{fig-motivate}, left panel). In both cases, the presence of an entanglement island $b$ is crucial to deriving the Page curve and thus the unitarity of the Hawking process. And in neither case is the entanglement wedge computed for a portion of the conformal boundary of AdS.

What should be regarded as the proper input for computing an entanglement wedge: a portion of the conformal boundary? Nongravitating systems outside the spacetime? In this paper, we will propose that the answer is neither: rather, the entanglement wedge prescription should be formulated as a map that takes any gravitating spacetime region as input, and outputs another (equal or larger) bulk spacetime region. The entanglement wedges of nongravitating systems and of boundary regions should be interpreted as limits of this more general prescription that arise if the input region is weakly gravitating or approaches the asymptotic boundary.

Because the real world has gravity, such a proposal might be testable at least in principle, unlike a boundary-to-bulk map. Importantly, it should extend the notion of an entanglement wedge to arbitrary spacetimes, including cosmology. 

For static spacetimes, we propose that the generalized entanglement wedge $E(a)$ is simply the superset of $a$ with smallest generalized entropy.\footnote{Throughout, the symbols $\subset$ and $\supset$ are understood to allow set equality.} We motivate this prescription using tensor network toy models of quantum gravity and show that it satisfies a number of desirable properties that suggest it correctly quantifies the range of holographic encoding.

\begin{figure}[t]
\begin{subfigure}{.48\textwidth}
  \centering
 \includegraphics[width = 0.6\linewidth]{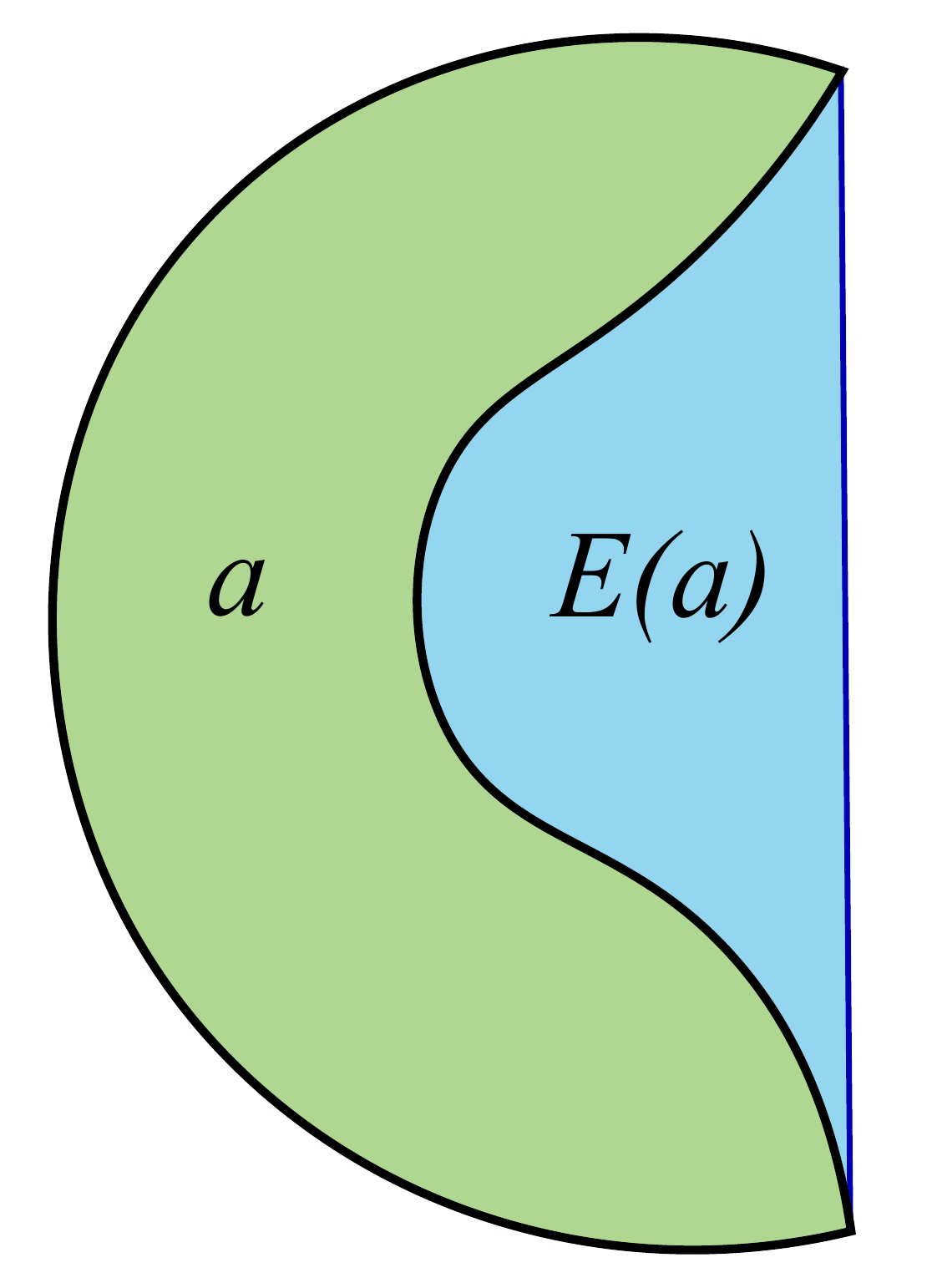}
\end{subfigure}
\begin{subfigure}{.48\textwidth}
  \centering
 \includegraphics[width = 0.75\linewidth]{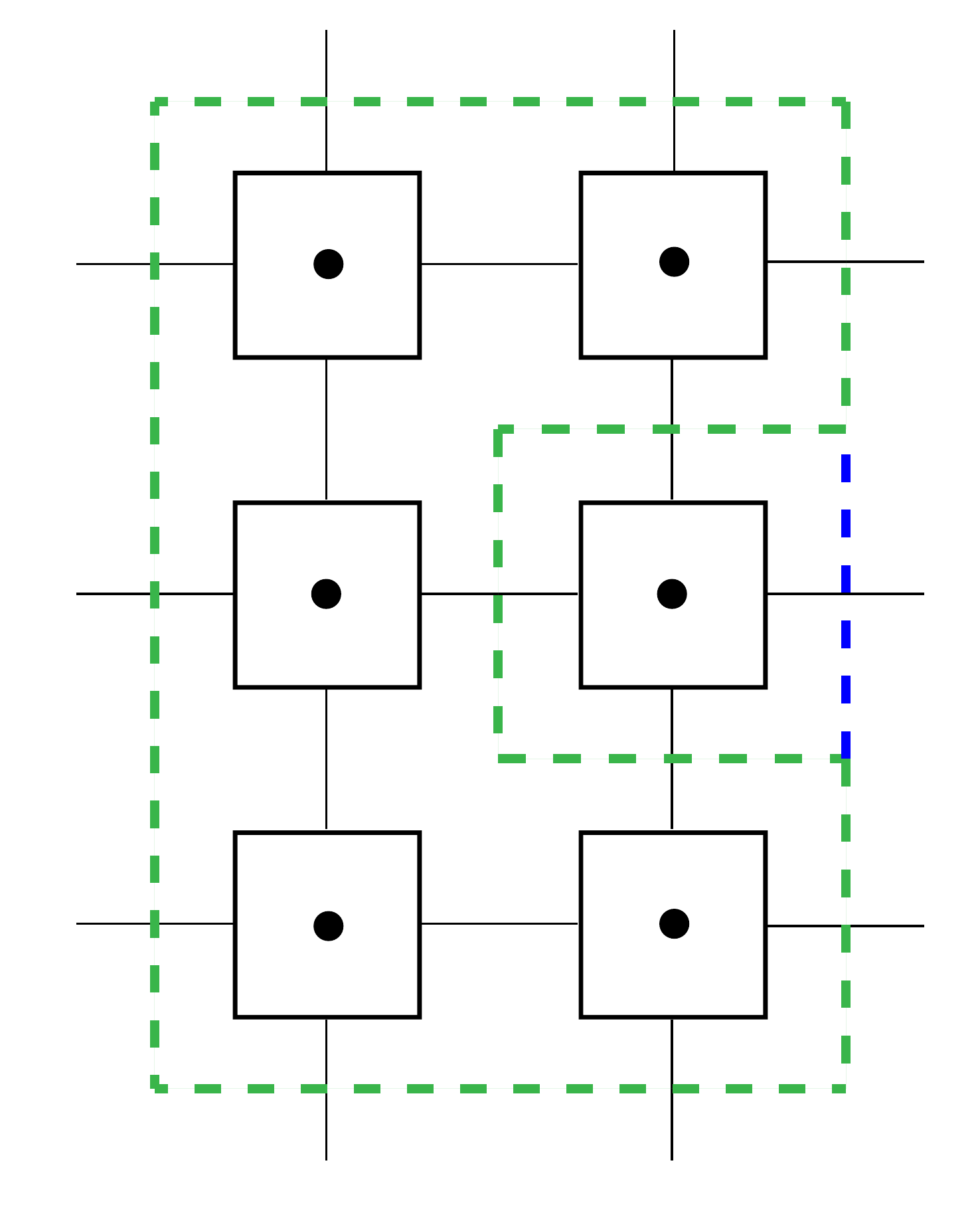}
\end{subfigure}
\caption{\textit{Left:} in this example, the edge of $a$ is a nonconvex surface on a static slice of Minkowski space, and the edge of $E(a)$ is its convex hull. \textit{Right:} a tensor network analogue of the region $a$. For random tensor networks, the map from the boundary of $E(a)$ (blue) to the boundary of $a$ (green) will be an approximate isometry.}
\label{fig:introfig}
\end{figure}

For general regions in time-dependent spacetimes, the correct definition of the generalized entanglement wedge seems less clear. We discuss one possible definition in some detail and show that it satisfies some but not all of the properties satisfied by $E$.

\subsection{Outline}
In Sec.~\ref{sec-motivation} we motivate our prescription. We argue in Sec.~\ref{sec-why} that it must be possible to assign a larger entanglement wedge at least to some gravitating regions. Universality then suggests that all gravitating regions have an entanglement wedge. In Sec.~\ref{sec-tensor} we analyze the example of tensor network toy models and argue that a sensible entanglement wedge may be assigned to bulk regions.

In Sec.~\ref{sec-prescription}, we define the entanglement wedge $E(a)$ of a bulk region $a$. We focus on the ``static'' case, when $a$ and $E(a)$ are part of time-reversal invariant initial data. That is, we propose a generalization of the Ryu-Takayanagi prescription~\cite{Ryu:2006bv} including quantum-corrections~\cite{Faulkner:2013ana}. We establish some definitions and notation in Sec.~\ref{sec-preliminaries}. In Sec.~\ref{sec-gew} we define $E(a)$ as the spatial region with smallest generalized entropy, among all regions that contain $a$. 
In Sec.~\ref{sec-properties} we show that $E$ obeys nontrivial properties consistent with its interpretation as an entanglement wedge: no-cloning, nesting, and strong subadditivity.

In Sec.~\ref{sec-covariant}, we discuss possible generalizations of our prescription to the time-dependent case, analogous to the Hubeny-Rangamani-Takayanagi prescription~\cite{Hubeny:2007xt} and its quantum extension by Engelhardt and Wall~\cite{Engelhardt:2014gca}. After introducing relevant concepts in Sec.~\ref{sec-covdefs}, we consider $E_n(a)$, the wedge with the smallest generalized entropy among all quantum-normal wedges that contain $a$ and share the conformal boundary (if any) of $a$. We show that $E_n$ reduces to the entanglement wedge prescription for boundary regions and static regions. Moreover, $E_n$ satisfies an appropriate no-cloning theorem. However, $E_n$ fails to satisfy strong subadditivity and nesting. We briefly discuss some alternative proposals.

\subsection{Discussion}
\label{sec-discussion}
By analogy with the interpretation of the usual entanglement wedge, our results suggest that information in $E(a)$ can be manipulated or summoned by a bulk observer in $a$. This is a striking implication. 

We expect that such operations would not admit an interpretation in terms of a continuous, classical spacetime. Indeed, even simple operations in AdS, such as the instantaneous, local addition of a particle deep in the bulk, would violate the Bianchi identity and hence cannot be represented as a continuous geometry without introducing time-folds. Yet, this operation can be implemented instantaneously on a global boundary slice $\sigma$. Continuity of the boundary manifold across $\sigma$ then suggests that the bulk exists as a semiclassical spacetime in the past and future of $\sigma$, but not spacelike to $\sigma$. We expect that the reconstruction, from a bulk region $a$, of operators in $E(a)\cap a'$ involves comparably violent breakdowns of the spacetime geometry, at least outside of $a$.

We stress that the full physical significance of our proposal is not yet clear. In particular, we are not proposing a specific reconstruction map for  operators in $E(a)\cap a'$, or any other generalization of the known entries of the holographic dictionary beyond the entanglement wedge prescription. But given its nontrivial properties, we expect $E$ to play a role in formulating quantum gravity as a holographic duality beyond AdS/CFT, for arbitrary spacetimes.

\subsection{Relation to other work} 
\label{sec-otherwork}

In Ref.~\cite{Dong:2020uxp}, a definition was given for entanglement islands of \emph{low-energy} bulk fields in a weakly-gravitating region. The fields have an explicit momentum cut-off, and higher energy operators, such as those that would create black holes, are explicitly excluded. Hence, when applied to an asymptotic bulk region, the prescription~\cite{Dong:2020uxp} would not give the entanglement wedge of the corresponding boundary region. Thus, while we believe that Ref.~\cite{Dong:2020uxp} correctly treats the problem formulated there, its construction cannot be the right answer to the question we pose.

In Ref.~\cite{Grado-White:2020wlb}, a ``restricted maximin prescription'' was proposed for the entanglement wedge of a boundary region in a \emph{cut-off} spacetime. Again, this prescription is intended to answer a different question to the one we are interested in here. As we describe in  Sec.~\ref{sec-covariantdiscussion}, a closely related prescription --- with the edge of $a$ playing the role of the cut-off surface --- yields a time-dependent generalization of our proposal that obeys strong subadditivity, but which violates the nesting property required for the entanglement wedges of bulk regions.

Finally, the present work does not support claims that there is no Page curve for an evaporating black hole \cite{Raju:2021lwh}. The Page curve is expected to describe the entropy of low-energy Hawking radiation modes, not all the information in principle accessible far from the black hole. For example, let $a$ be the exterior of a large sphere in asymptotically flat space. Then the generalized entanglement wedge we define, $E(a)$, can include the entire interior enclosed by $a$. But this does not mean that the information in $E(a)\backslash a$ can be accessed via low-energy operators in $a$.

\section{Motivation}
\label{sec-motivation}

\subsection{Why a Bulk Region Should Have an Entanglement Wedge}
\label{sec-why}

We begin by arguing that at least {\em some} gravitating bulk regions constitute legitimate input to the entanglement wedge prescription. We will discuss the two examples shown in Fig.~\ref{fig-motivate}.

First, consider the Hawking radiation emitted by a black hole in asymptotically flat space, after the Page time. The radiation must contain the same information, whether it has been extracted into an auxiliary nongravitating system, or arrived at a portion $A$ of future null infinity, or is still traversing a distant, weakly gravitating region $a$. Operationally, the presence of an island $b$ in the former cases indicates unitarity: sufficiently careful experiments would show that the ultimate state of the radiation is pure. But if such experiments could only succeed in the complete absence of gravity, they would fail in the real world, so the question of unitarity would be operationally meaningless. 

These considerations make it implausible that the radiation has an entanglement island only if its self-gravity is completely turned off, but not if its self-gravity is arbitrarily small. It follows that the weakly gravitating region $a$ must be assigned an entanglement wedge that includes the island $b$:
\begin{equation}
    E(\mbox{distant Hawking radiation}~a) \supset a \cup b~.
\end{equation}

Let us pause for a moment to examine what is happening here. Careful measurements on one part of the bulk spacetime, the Hawking radiation, are allowing us to extract information that resides (from the semiclassical viewpoint) in a distant spacelike-separated region: the black hole interior. This is a striking example of a fundamental nonlocality that appears to be present in quantum gravity; sufficiently complex operators do not have to respect the semiclassical structure of spacetime. Importantly this fundamental nonlocality is not solely a feature of black hole physics; instead it is an essential aspect of holography. This brings us to our second example (Fig.~\ref{fig-motivate}, right panel). 

In the AdS/CFT correspondence, local CFT operators in a conformal boundary subregion $A$ are dual to (quasi-)local bulk operators near $A$~\cite{Banks:1998dd,Hamilton:2006az}. The notion of ``near'' can be made precise by defining a bulk region $a$ as the union of the entanglement wedges of tiny boundary regions containing slightly smeared local boundary operators.\footnote{High-dimensional local operators in the CFT will not correspond to low-energy fields in the bulk effective field theory. Instead, they will generically create large black holes. However these black holes will still be localized near the asymptotic boundary. See, \emph{e.g.}, the discussion in \cite{Harlow:2018tng}.} But local CFT operators generate the entire algebra of the CFT. It follows that the algebra generated by {\em bulk operators} in the near-boundary bulk region $a$ encodes the \emph{entire} entanglement wedge of $A$.\footnote{In 2+1 boundary dimensions and higher, Wilson loops play an important role; these are nonlocal operators that (in non-Abelian gauge theories) cannot be written as a product of gauge-invariant local operators. However, any Wilson loop will be contained in a thin boundary strip with arbitrarily shallow entanglement wedge. Therefore it should also be accessible as a bulk operator in the corresponding near-boundary region, and should still be contained in the associated bulk algebra.} Information apparently located deep in the bulk must be secretly, and highly nonlocally, encoded in quasilocal bulk operators near the asymptotic boundary.\footnote{In fact, this encoding is exponentially simpler than the encoding of the black hole interior in Hawking radiation \cite{Harlow:2013tf, Brown:2019rox}.} 

Thus, the traditional entanglement wedge prescription for boundary subregions, $\mathrm{EW}(A)$, can be reinterpreted as a statement about the algebra generated by quasi-local bulk operators in the asymptotic bulk region $a$, without any reference to a CFT dual. It follows that an asymptotic bulk region $a$ in asymptotically AdS spacetime must possess an entanglement wedge --- indeed, the same entanglement wedge that would have been assigned to its conformal boundary $A$:
\begin{equation} \label{eq:asymptoticspecialcase}
    E(\mbox{asymptotic AdS bulk region}~a) = \mathrm{EW}(A)~.
\end{equation}

If two such apparently dissimilar bulk regions --- Hawking radiation and asymptotic regions in AdS --- encode a larger entanglement wedge, then Occam's razor suggests that
\begin{itemize}
    \item \emph{any} bulk region should have an associated entanglement wedge; and 
    \item \emph{only} bulk regions possess an entanglement wedge. (Boundary regions and auxiliary systems can be viewed as idealized limits of bulk regions that are near asymptotic infinity or very weakly gravitating respectively.)
\end{itemize}

\subsection{Motivation From Tensor Networks }
\label{sec-tensor}

\begin{figure}[t]
\begin{subfigure}{.48\textwidth}
  \centering
 \includegraphics[width = 0.75\linewidth]{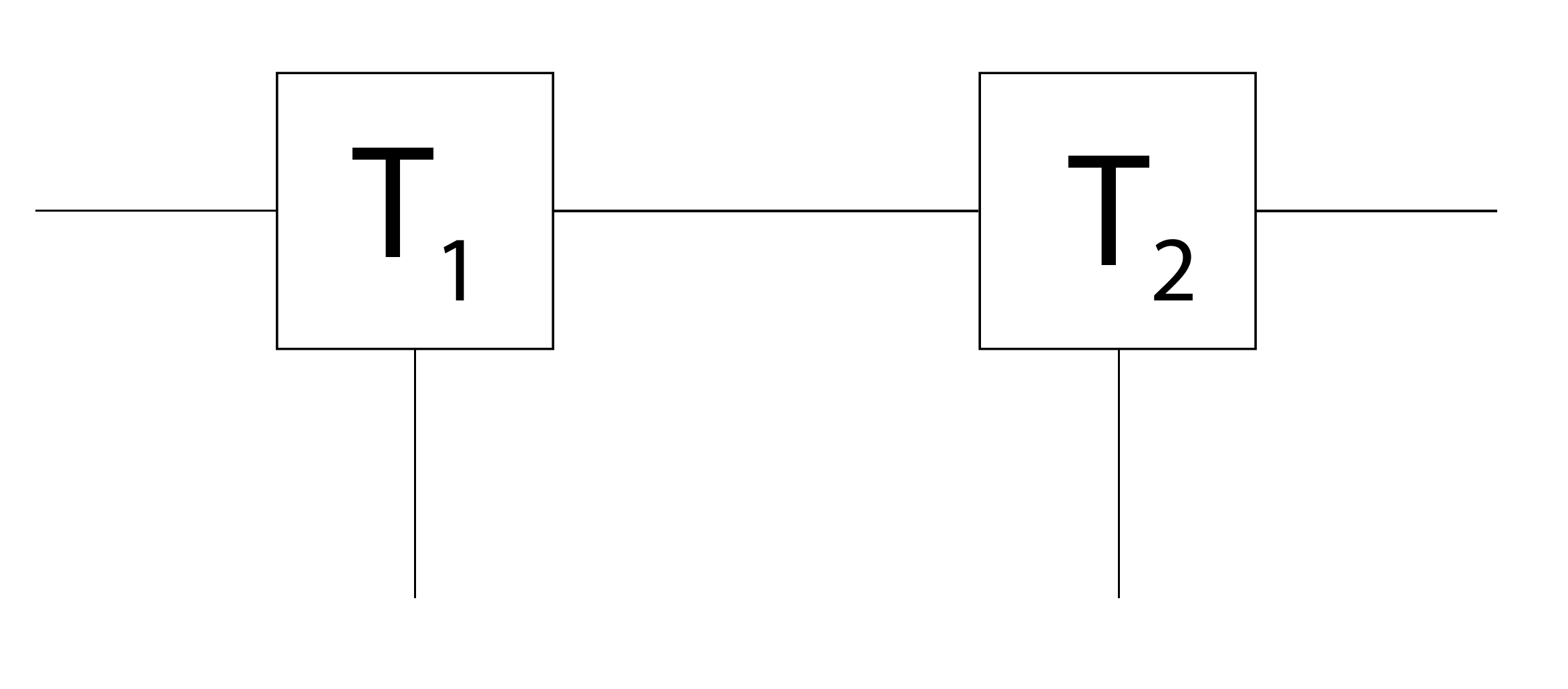}
\end{subfigure}
\begin{subfigure}{.48\textwidth}
  \centering
 \includegraphics[width = 0.75\linewidth]{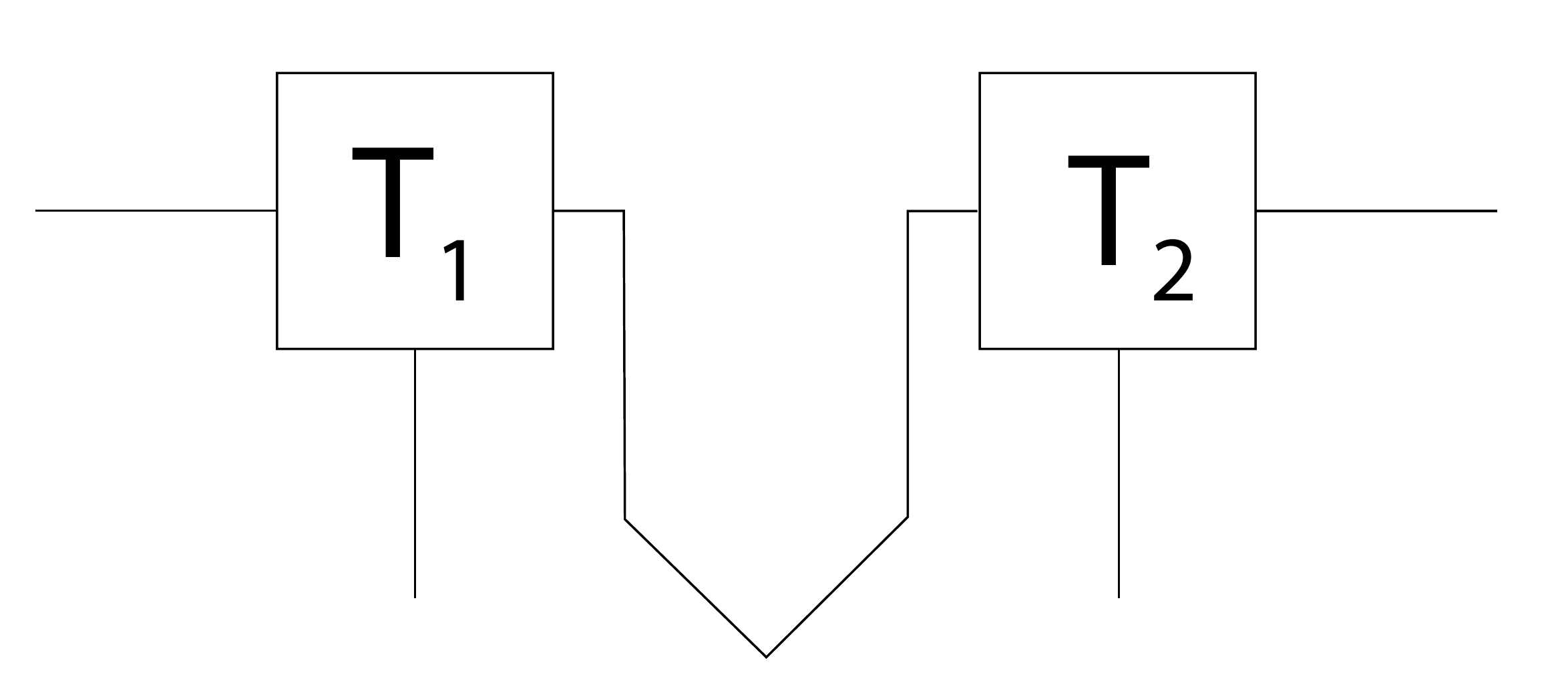}
\end{subfigure}
\caption{\textit{Left:} Two tensors $T_1$ and $T_2$ in a tensor network. The in-plane legs (horizontal) connect different tensors together, while the bulk legs (vertical) input matter degrees of freedom. \textit{Right:} By bending the in-plane leg connecting $T_1$ and $T_2$ around, we can reinterpret it  as two maximally entangled bulk legs, one on each tensor.}
\label{fig-tensors}
\end{figure}

Additional motivation for such a proposal comes from tensor networks, a useful toy model of static states in quantum gravity~\cite{Swingle:2009bg, Pastawski:2015qua, Hayden:2016cfa, Bao:2018pvs}. In these models, each tensor represents a patch of the gravitational spacetime, with the logarithm of the dimension of an edge capturing the area (in Planck units) of the surface connecting two neighbouring patches. Bulk degrees of freedom within the patch are described by additional ``out-of-plane'' legs that feed into the network.

Importantly, the distinction between out-of-plane and in-plane tensor network legs is not precise. Mathematically, highly entangled Rindler modes on either side of a cut play an identical role in the tensor network as in-plane legs representing area, as shown in Fig.~\ref{fig-tensors}. The same effect is seen in gravity where the Rindler modes and geometrical area are equivalent under renormalization group flow. As we increase the UV cut-off on the bulk quantum field theory, the amount of Rindler entanglement across a surface increases; this increase is cancelled in a computation of generalized entropy by the decrease in $A/4G$ created by the renormalization of Newton's constant.

If we extrapolate the UV cut-off all the way to the Planck scale, it is natural to conjecture that the area term, or analogously the in-plane tensor network legs, go away entirely, with the entanglement coming entirely from out-of-plane bulk legs. In such a limit, the tensor network consists purely of a set of projection operators that map the bulk state to the boundary state.

Why does this suggest that a region $a$ encodes the larger region $E(a)$, as suggested in Fig.~\ref{fig:introfig}? If in-plane legs are really bulk Planck scale bulk legs, then an operator $O$ can be reconstructed by a (potentially very high energy) operator within the region $a$ if the action of $O$ is the same as an operator $\tilde O$ that acts on the out-of-plane legs within $a$, along with the in-plane legs at the boundary of $a$. After all, both sets of degrees of freedom should be in principle accessible from $a$, even if the Planckian degrees of freedom %are difficult to manipulate in practice.
may not be described by an effective field theory.

In a tensor network, $E(a)$ corresponds to the smallest-entropy cut through the network such that $a$ is contained entirely within the cut (see Fig. \ref{fig:introfig}, right panel). Let us assume for simplicity that the dominant contribution to the entropy of any given cut comes from the maximally entangled in-plane legs. (In gravity, this corresponds to taking the semiclassical limit where the area term gives the dominant contribution to generalized entropy.) Now consider the map $V$ induced by the tensor network from the boundary of $E(a)$, together with bulk legs in $E(a) \backslash a$, to the boundary of $a$. By definition, any intermediate cut through this subnetwork describing $V$ has larger area (i.e. much larger bond dimension) than the input to $V$. For networks with sufficiently generic tensors, this ensures that the map $V$ is an (approximate) isometry,\footnote{Concretely, this will be the case with very high probability if the tensors are drawn from a Haar random ensemble \cite{Hayden:2016cfa}.} and hence that the operator $\tilde O = V^\dagger O V$, which acts on the in-plane legs at the boundary of $a$, successfully reconstructs any operator $O$ acting in $E(a) \backslash a$.

\section{Generalized Entanglement Wedges in Time-reflection\\ Symmetric Slices}
\label{sec-prescription}

\subsection{Preliminaries}
\label{sec-preliminaries}

\begin{defn}\label{def:trs}
A spacetime $M$ is \emph{time-reflection symmetric} if there exists a $\mathbb{Z}_2$ action on $M$ that exchanges past and future timelike directions and preserves all points in a Cauchy surface $\Sigma$. %We use $a,b,\ldots$ to denote \emph{time-reflection symmetric wedges}, defined as wedges satisfying $T(a) = a$.
\end{defn}

\begin{conv}
In this section, we will only be concerned with the time-reflection symmetric Cauchy surface $\Sigma$, regarded as a manifold with Riemannian metric $h$. We assume that $\Sigma$ is inextendible.
\end{conv}

\begin{defn}
We use $\partial s$ to denote the boundary of a set $s\subset \Sigma$ in the induced topology of $\Sigma$. Also $\cl s\equiv s \cup \partial s$ and $\setint s = s \setminus \partial s$. \end{defn}

\begin{defn}
A \emph{wedge} $a$ is any open subset of $\Sigma$ that is the interior of its closure: $a=\setint \cl a$. (The name is slightly more natural in the more general time-dependent context; it is adopted for compatibility with the existing phrase ``entanglement wedge.'')
\end{defn}

\begin{defn}
The intersection of two wedges is a wedge; but the union, complement, and relative complement need not be. Given two wedges $a$ and $b$, the \emph{wedge union}, \emph{wedge complement}, and \emph{wedge relative complement} are wedges defined by 
\begin{align}
    a\Cup b & \equiv \setint\cl(a\cup b)~,\\
    a' & \equiv \setint a^C~, \\
    a\setminus b & \equiv a \cap b'~.
\end{align}
\end{defn}

\begin{defn}\label{def:area}
The \emph{area} of a wedge $a$ is the geometric area of $\partial a$ and is denoted $\A(a)$.
\end{defn}

\begin{defn}\label{def:sgen}
The {\em generalized entropy} of a wedge $a$ is
\begin{equation}\label{sgendef}
    S_{\rm gen}(a)\equiv \frac{\A(\partial a)}{4G\hbar} + S(a) + \ldots
\end{equation}
Here %${\A}$ denotes the area, and
\begin{equation}
    S(a) = - \tr \rho_a \log \rho_a
\end{equation}
 is the von Neumann entropy of
\begin{equation}
     \rho_a = \tr_{a'} \rho~,
\end{equation}
the density operator of the quantum fields restricted to $a$.
\end{defn}
\begin{rem}
The area in Eq.~\eqref{sgendef} can be thought of as a counterterm that cancels the leading divergence in the von Neumann entropy $S$. Additional counterterms for subleading divergences have been omitted. See the appendix of Ref.~\cite{Bousso:2015mna} for details.
\end{rem}

\begin{defn}
Let $\tilde \Sigma\equiv (M,\tilde h)$ denote the \emph{conformal completion} of $\Sigma$, obtained by conformally mapping $\Sigma$ to a subset of a compact set (for example, a sphere) and taking the closure of the image. $\tilde \Sigma$ may be a manifold-with-boundary~\cite{Wald:1984rg}. The boundary of the image, $\partial \tilde \Sigma$, is called the \emph{conformal boundary} of $\Sigma$. We will not distinguish notationally between a set $a\subset \Sigma$ and its image in $\tilde\Sigma$. However, such a distinction is crucial for boundaries of sets. 
\end{defn}

\begin{defn}\label{def:staticconformaledge}
The boundary of a wedge $a\subset \tilde \Sigma$ will be denoted $\delta a$. As shown in Fig. \ref{fig-conf-bdy}, we define the {\em conformal boundary} of $a$ as the set
\begin{equation}
    \tilde\partial a\equiv \delta a\cap \partial\tilde \Sigma~;
\end{equation}
thus,
\begin{equation}
    \delta a = \partial a\sqcup \tilde\partial a~.
\end{equation}
\end{defn}

\begin{figure}[t]
  \centering
 \includegraphics[width = 0.5\linewidth]{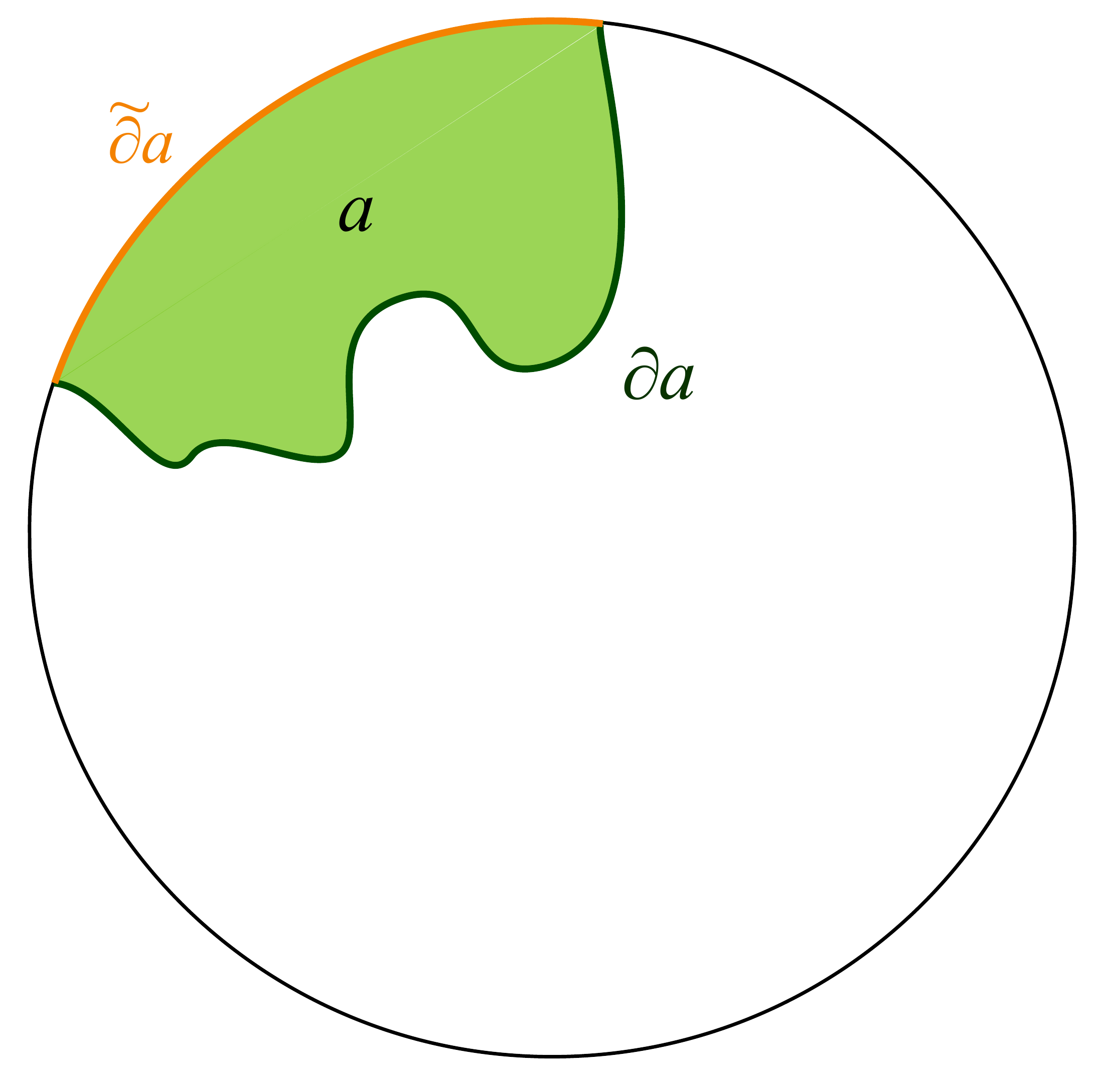}
\caption{The boundary $\delta a \in \tilde\Sigma$ of a wedge $a$ in the conformal completion $\tilde\Sigma$ decomposes into the original boundary $\partial a \in \Sigma$ and a conformal boundary $\tilde \partial a \in \partial \tilde\Sigma$.}
\label{fig-conf-bdy}
\end{figure}

\subsection{Definition and Basic Properties}
\label{sec-gew}

\begin{defn}\label{ewstatic}
Given a wedge $a\subset\Sigma$, we define its {\em generalized entanglement wedge}, $E(a)$, as the wedge that satisfies
\begin{equation}
    a\subset E(a)\subset \Sigma~~~\mbox{and}~~~\tilde\partial a = \tilde\partial E(a)
\end{equation}
and which has the smallest generalized entropy among all such wedges. We assume without proof that $E(a)$ exists, and for convenience we shall assume that it is unique.
\end{defn}

\begin{lem}[Monotonicity]\label{monolem}
Let $a$ and $b$ be wedges with the same conformal boundary, $\tilde\partial a=\tilde\partial b$. Then 
\begin{equation}
    a\subset b\implies \S[E(a)]\leq \S[E(b)]~.
\end{equation}
That is, the generalized entropy of the entanglement wedge increases monotonically with the input wedge under inclusion, if the conformal boundary is held fixed.
\end{lem}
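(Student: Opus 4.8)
The plan is to observe that $E(b)$ is itself an admissible competitor in the variational problem that defines $E(a)$, so that the minimality of $E(a)$ immediately yields the claim. No property of the generalized entropy is needed; only a nesting of feasible sets.

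Concretely, let $\mathcal{F}_a$ denote the family of all wedges $w$ with $a\subset w\subset\Sigma$ and $\tilde\partial w=\tilde\partial a$, so that by Definition~\ref{ewstatic} $E(a)$ is the element of $\mathcal{F}_a$ of smallest generalized entropy; define $\mathcal{F}_b$ analogously. The key step is to verify that $E(b)\in\mathcal{F}_a$. Since $E(b)\in\mathcal{F}_b$ we have $b\subset E(b)\subset\Sigma$, and hence by the hypothesis $a\subset b$ also $a\subset E(b)\subset\Sigma$; moreover $\tilde\partial E(b)=\tilde\partial b=\tilde\partial a$, the first equality because $E(b)\in\mathcal{F}_b$ and the second by hypothesis. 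Thus $E(b)$ is one of the wedges over which the generalized entropy is minimized in the definition of $E(a)$, so $\S[E(a)]\leq\S[E(b)]$.

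I expect no real obstacle here: the argument is purely set-theoretic bookkeeping, and in particular uses nothing about the structure of $\S$ (neither the form of the area term nor any entropy inequality) --- only the inclusion $E(b)\in\mathcal{F}_b\subset\mathcal{F}_a$. The only points requiring a modicum of care are that $E(b)$ is indeed a wedge (part of the output of Definition~\ref{ewstatic}) and that the assumed existence and uniqueness of $E(a)$ and $E(b)$ make the statement well-posed; both are already granted in Definition~\ref{ewstatic}.
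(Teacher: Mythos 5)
Your argument is correct and is exactly the paper's proof: $E(b)\supset b\supset a$ and $\tilde\partial E(b)=\tilde\partial b=\tilde\partial a$ make $E(b)$ an admissible competitor in the minimization defining $E(a)$, so minimality gives $\S[E(a)]\leq \S[E(b)]$. The paper states this in one line; your version merely spells out the same bookkeeping.
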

\begin{proof}
$E(b)\supset b\supset a \implies \S[E(b)]\geq \S[E(a)]$ by Def.~\ref{ewstatic}.
\end{proof}

\begin{defn}[Entanglement wedge of a boundary region] \label{defn:bdyews}
Let $B \subset \partial \tilde\Sigma$ be a subregion of the conformal boundary  $\partial \tilde\Sigma$. The (ordinary) \emph{entanglement wedge} $\mathrm{EW}(B)\subset \Sigma$~\cite{Ryu:2006bv,Faulkner:2013ana} is the wedge with conformal boundary $B$ and smallest generalized entropy among all such sets.
\end{defn}

\begin{lem}[$\mathrm{EW}$ as a special case of $E$] \label{lem:genew=ews}
If the wedge $a$ lies in the (ordinary) entanglement wedge of its conformal boundary, $a \subset \mathrm{EW}(\tilde\partial a)$, then its generalized entanglement wedge is
\begin{equation}
E(a) = \mathrm{EW}(\tilde\partial a).
\end{equation}
\end{lem}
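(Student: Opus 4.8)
The plan is to show the two inclusions $E(a)\subset \mathrm{EW}(\tilde\partial a)$ and $E(a)\supset \mathrm{EW}(\tilde\partial a)$ separately, using the minimizing property of each object and the hypothesis $a\subset\mathrm{EW}(\tilde\partial a)$. Note first that $\tilde\partial\,\mathrm{EW}(\tilde\partial a) = \tilde\partial a$ by Definition~\ref{defn:bdyews}, so $\mathrm{EW}(\tilde\partial a)$ is one of the wedges competing in the variational problem defining $E(a)$ in Definition~\ref{ewstatic} (it contains $a$ by hypothesis and has the right conformal boundary). Hence $\S[E(a)]\le \S[\mathrm{EW}(\tilde\partial a)]$.

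For the reverse comparison, I would like to say that $E(a)$ is a competitor in the problem defining $\mathrm{EW}(\tilde\partial a)$: it has conformal boundary $\tilde\partial a$, so indeed $\S[\mathrm{EW}(\tilde\partial a)]\le \S[E(a)]$. Combining the two gives $\S[E(a)] = \S[\mathrm{EW}(\tilde\partial a)]$, i.e. both wedges globally minimize $\S$ over the class of wedges with conformal boundary $\tilde\partial a$. The remaining step is to upgrade this equality of generalized entropies to an equality of wedges. This is where the uniqueness assumptions enter: Definition~\ref{ewstatic} already assumes $E(a)$ is the \emph{unique} minimizer among wedges containing $a$ with the given conformal boundary, and the analogous prescription for $\mathrm{EW}$ is conventionally taken to have a unique minimizer as well. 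Since $\mathrm{EW}(\tilde\partial a)$ achieves the same value of $\S$ as $E(a)$ and contains $a$, uniqueness of $E(a)$ forces $\mathrm{EW}(\tilde\partial a) = E(a)$.

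The main obstacle is the logical gap in the reverse direction: a priori $\mathrm{EW}(\tilde\partial a)$ minimizes $\S$ over \emph{all} wedges with conformal boundary $\tilde\partial a$, whereas $E(a)$ only minimizes over the subclass that additionally contains $a$, so it is not immediate that the unconstrained minimum is attained within the constrained class. The hypothesis $a\subset\mathrm{EW}(\tilde\partial a)$ is exactly what closes this gap: it certifies that the unconstrained minimizer $\mathrm{EW}(\tilde\partial a)$ already lies in the constrained class, so the two minimization problems have the same value and (by uniqueness) the same minimizer. I would state this explicitly rather than leaving it implicit. A secondary, more technical point is to check that $\mathrm{EW}(\tilde\partial a)$ is genuinely a \emph{wedge} in the sense of the paper's definition (interior of its closure) so that it is an admissible competitor; this is part of the standard setup for $\mathrm{EW}$ and I would simply invoke it.
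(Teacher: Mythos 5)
Your argument is correct and is exactly the paper's: the paper simply states that the result follows immediately from Definitions~\ref{ewstatic} and \ref{defn:bdyews}, and your two-inclusion comparison of the constrained and unconstrained minimization problems, closed by the hypothesis $a\subset\mathrm{EW}(\tilde\partial a)$ and the uniqueness assumption, is the natural unpacking of that one-line claim. No gaps.
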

\begin{proof}
The result follows immediately from Definitions~\ref{ewstatic} and \ref{defn:bdyews}.
\end{proof}

\begin{rem} [Asymptotic bulk regions]
Let $a_n$ be an infinite sequence of wedges with $a_{n+1} \subset a_n$, $\tilde\partial a_{n+1} = \tilde \partial a_n$ and $\cap_n a_n = \varnothing$. Then $E(a_n) = \mathrm{EW}(\tilde \partial a_n)$ for all sufficiently large $n$. 
\end{rem}

\subsection{No-Cloning, Strong Subadditivity, and Nesting}
\label{sec-properties}

\begin{figure}[t]
\begin{center}
  \includegraphics[width = 0.6\linewidth]{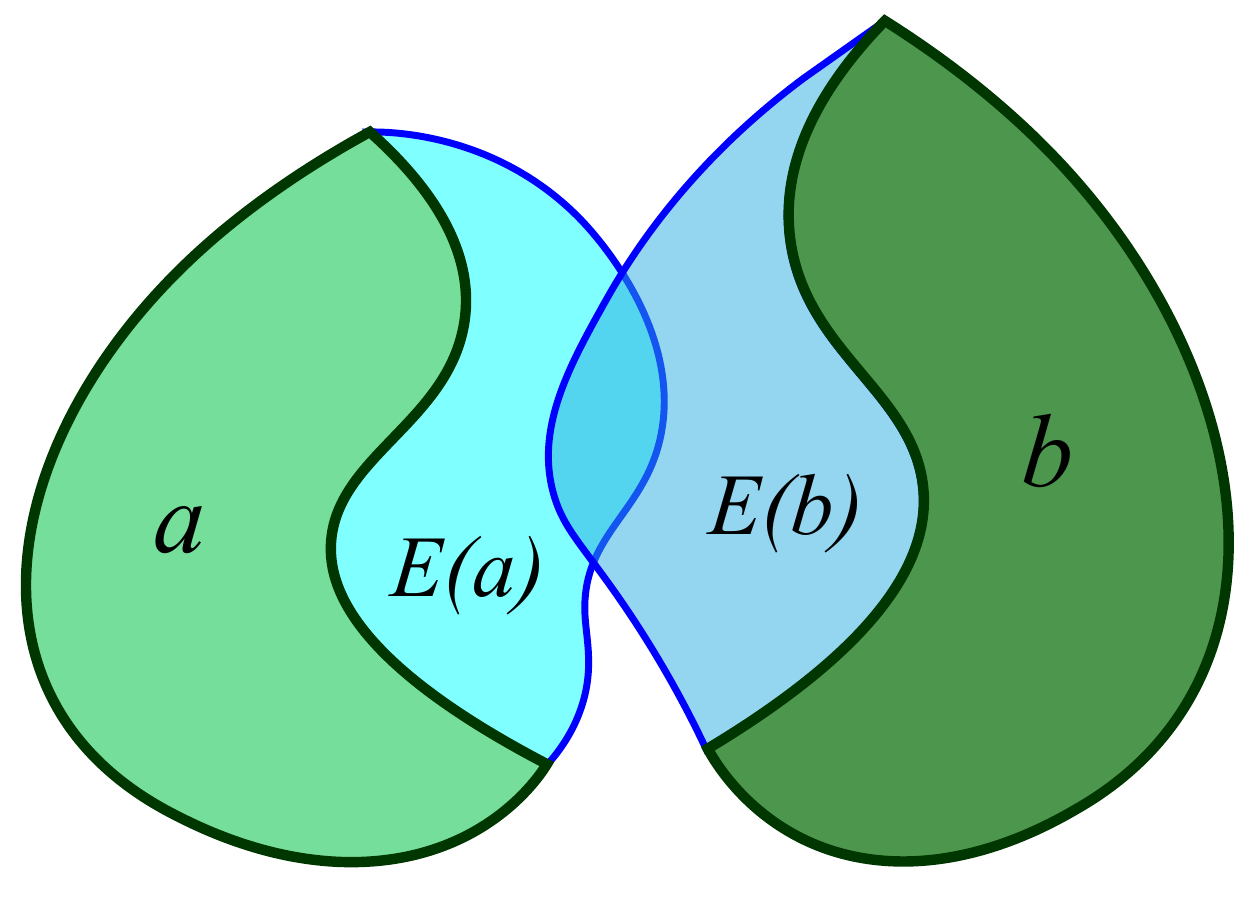} 
\end{center}
\caption{A configuration excluded by Theorem \ref{staticnocloning} (No-cloning): if $a \cap E(b)=\varnothing$ and $b\cap E(a)=\varnothing$ then $E(a) \cap E(b)=\varnothing$, because otherwise removing the overlap from each entanglement wedge would fail to increase their combined generalized entropy.}
\label{fig:nocloning}
\end{figure}

\begin{thm}[No cloning]\label{staticnocloning}
Let $a,b$ be wedges that satisfy
\begin{equation}\label{staticexclusions}
    a \cap E(b)=\varnothing~, ~~ b\cap E(a)=\varnothing~.
\end{equation}
Then
\begin{equation}\label{staticnocloneq}
    E(a)\cap E(b)=\varnothing~.
\end{equation}
\end{thm}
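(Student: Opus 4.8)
The plan is to argue by contradiction using a standard ``crossing surfaces'' / entropy inequality argument, exactly as the caption of Fig.~\ref{fig:nocloning} suggests. Suppose $E(a)\cap E(b)\neq\varnothing$. The idea is that each entanglement wedge could then be made strictly smaller (by deleting the overlap region from it) while still containing its input wedge --- this last point is where the hypothesis \eqref{staticexclusions} is used: since $a\cap E(b)=\varnothing$ we have $a\subset E(a)\setminus E(b)$, and symmetrically $b\subset E(b)\setminus E(a)$. So $E(a)\setminus E(b)$ and $E(b)\setminus E(a)$ are admissible competitor wedges (same conformal boundary, containing the input). By the minimality defining $E$ in Def.~\ref{ewstatic}, we would get $\S[E(a)\setminus E(b)]\geq \S[E(a)]$ and $\S[E(b)\setminus E(a)]\geq \S[E(b)]$, hence
\begin{equation}
\S[E(a)\setminus E(b)] + \S[E(b)\setminus E(a)] \geq \S[E(a)] + \S[E(b)]~.
\end{equation}
The strategy is to contradict this with the reverse inequality, which should follow from strong subadditivity of generalized entropy applied to the wedges $E(a)$ and $E(b)$ --- informally, $\S[E(a)\cup E(b)] + \S[E(a)\cap E(b)] \leq \S[E(a)] + \S[E(b)]$ together with $\S[E(a)\setminus E(b)]\leq$ something --- but the cleaner route is to use the ``monogamy''-type rearrangement: SSA gives $\S[x\setminus y] + \S[y\setminus x] \leq \S[x] + \S[y]$ for wedges $x,y$, which is precisely the inequality needed, and it is strict (or at least the chain of inequalities cannot all be equalities in a way consistent with $E(a)\cap E(b)$ being nonempty and $E$ being unique) to force the contradiction.

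Concretely I would proceed as follows. First, record the set-theoretic facts: from \eqref{staticexclusions}, $a\subset E(a)\cap E(b)'= E(a)\setminus E(b)$ and $b\subset E(b)\setminus E(a)$; also both of these are wedges by the definitions in Sec.~\ref{sec-preliminaries}, and they have the same conformal boundaries as $a$ and $b$ respectively (deleting a region in the interior does not touch $\tilde\partial$). Second, invoke the defining minimality of $E$ to get the ``$\geq$'' inequality displayed above. Third, establish the ``$\leq$'' inequality $\S[E(a)\setminus E(b)] + \S[E(b)\setminus E(a)] \leq \S[E(a)] + \S[E(b)]$; the area terms obey the corresponding inequality because $\partial(E(a)\setminus E(b))\cup\partial(E(b)\setminus E(a))$ can be built from pieces of $\partial E(a)$ and $\partial E(b)$ with no more total area (a cut-and-paste argument at the crossing locus), and the entropy terms obey it by strong subadditivity of the von Neumann entropy of quantum fields. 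Combining the second and third steps forces every inequality to be saturated, and then uniqueness of $E(a)$ (assumed in Def.~\ref{ewstatic}) implies $E(a)\setminus E(b) = E(a)$, i.e.\ $E(a)\cap E(b)=\varnothing$, the desired conclusion.

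The main obstacle I anticipate is the step bounding the area terms, i.e.\ showing $\A[\partial(E(a)\setminus E(b))] + \A[\partial(E(b)\setminus E(a))] \leq \A[\partial E(a)] + \A[\partial E(b)]$. Near a transverse intersection of the surfaces $\partial E(a)$ and $\partial E(b)$ this is the familiar statement that ``rounding the corners the right way'' does not increase area, but one must be careful about non-transverse or tangential intersections, about pieces of the boundaries that coincide, and about the fact that $E(a)\setminus E(b)$ and $E(b)\setminus E(a)$ are defined via the wedge operations ($\setint\cl$ of the naive set operations), so their topological boundaries need not be literally the union of pieces of the original boundaries. A clean way to handle this may be to note that $\partial(E(a)\setminus E(b))\subset \partial E(a)\cup\partial E(b)$ as sets (up to measure zero), and similarly for $E(b)\setminus E(a)$, and that the two resulting boundaries partition $\partial E(a)\cup\partial E(b)$ into essentially disjoint pieces, so their total area is at most $\A[\partial E(a)]+\A[\partial E(b)]$; making this rigorous --- especially the ``essentially disjoint'' and the measure-zero caveats --- is the delicate part. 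The von Neumann entropy inequality, by contrast, is just strong subadditivity and should be immediate. A secondary subtlety is ensuring the final saturation argument genuinely yields $E(a)\cap E(b)=\varnothing$ rather than merely that the overlap has measure zero; this is where invoking the uniqueness assumption on $E$, together with the fact that wedges equal the interior of their closure, does the work.
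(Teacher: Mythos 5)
Your proposal is correct and follows essentially the same route as the paper's own proof: the decomposition into $E(a)\setminus E(b)$ and $E(b)\setminus E(a)$, the cut-and-paste area inequality, strong subadditivity for the von Neumann entropies, and the appeal to minimality and uniqueness in Def.~\ref{ewstatic}. The technical caveats you flag about the area bookkeeping and about saturation versus strict decrease are real, but the paper glosses over them in exactly the same way.
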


\begin{proof}
As shown in Fig. \ref{fig:nocloning},
\begin{align}
    \A[E(a)\setminus E(b)] + \A[E(b)\setminus E(a)] \leq \A[E(a)] + \A[E(b)]~.
\end{align}
(This need not be an equality; for example, there may be a disconnected shared boundary in $E(a)\cap E(b)$.) Strong subadditivity implies the same inequality for the von Neumann entropies; hence
\begin{align}
    \S[E(a)\setminus E(b)] + \S[E(b)\setminus E(a)] \leq \S[E(a)] + \S[E(b)]~.
\end{align}
That is, the generalized entropy of $E(a)$ or $E(b)$ decreases when $E(a)\cap E(b)$ is removed. By the assumption of the Lemma, $a \subset E(a) \setminus E(b)$ and $b \subset E(b) \setminus E(a)$, so this contradicts Def.~\ref{ewstatic} unless $E(a)\cap E(b)=\varnothing$.
\end{proof}

\begin{figure}[t]
\begin{center}
  \includegraphics[width = 0.6\linewidth]{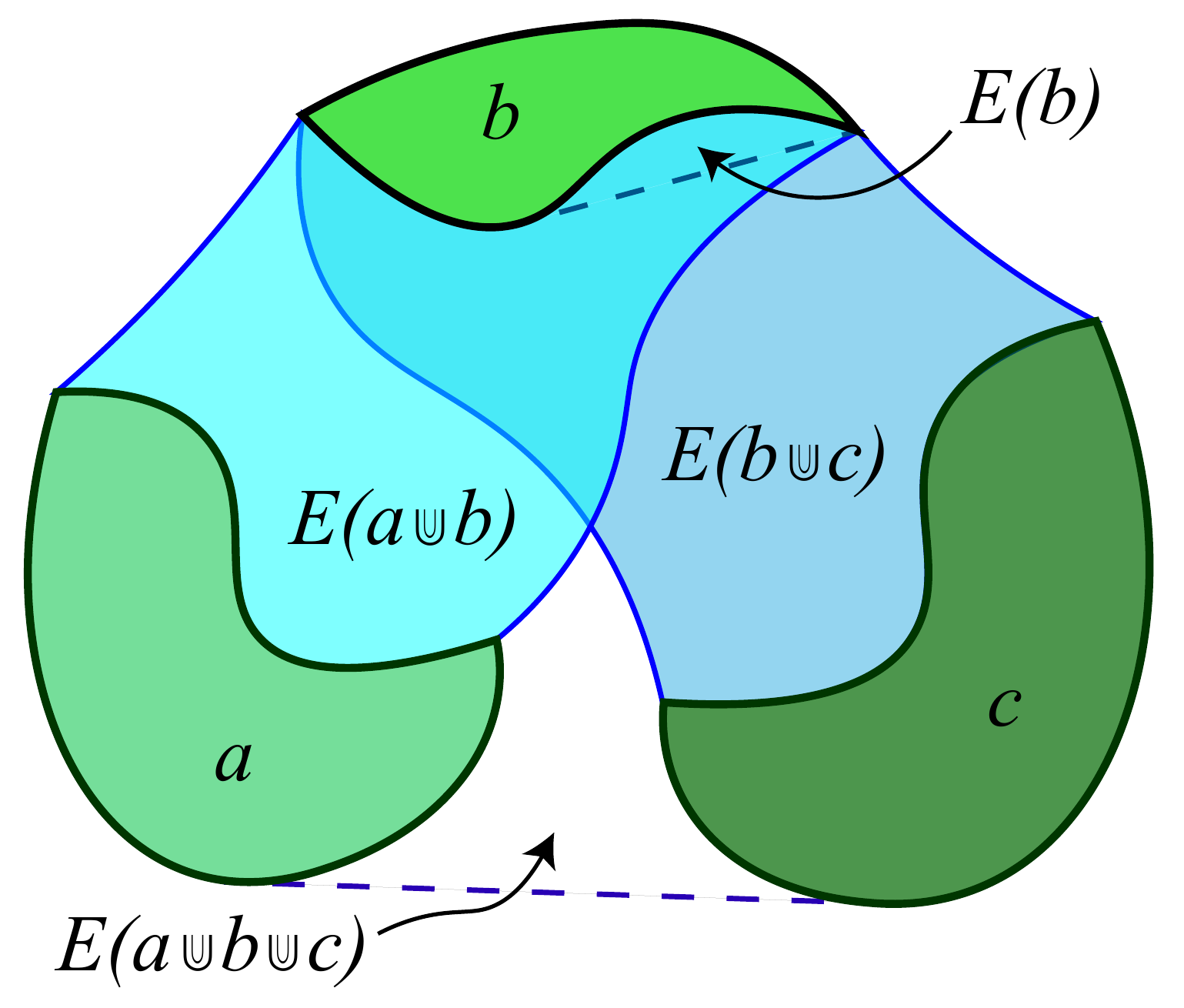} 
\end{center}
\caption{An illustration of the setup from Theorem \ref{thm:ssastatic} (Strong subadditivity): The combined $\S$ of $E(a \Cup b)$ and $E(b \Cup c)$ is at least big as the combined $\S$ of their intersection and union. In turn, these upper bound the combined $\S$ of $E(b)$ and $E(a \Cup b \Cup c)$ respectively.}
\label{fig:SSA}
\end{figure}
\begin{thm}[Strong subadditivity]\label{thm:ssastatic}
Let $a$, $b$ and $c$ be mutually disjoint open subsets of $\Sigma$. Then
\begin{equation}\label{ssastatic}
    \S[E(a \Cup b)] + \S[E(b \Cup c)] \geq \S[E(b)] + \S[E(a \Cup b \Cup c)]~.
\end{equation}
\end{thm}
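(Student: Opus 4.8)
The plan is to mimic the standard proof of strong subadditivity for the ordinary entanglement wedge, using the entropic strong subadditivity of the generalized entropy together with the minimality in Definition~\ref{ewstatic}. First I would set $A \equiv E(a \Cup b)$ and $C \equiv E(b \Cup c)$, the two entanglement wedges appearing on the left-hand side of \eqref{ssastatic}. The key geometric step is the ``cut-and-paste'' move: form the wedge union $A \Cup C$ and the wedge intersection $A \cap C$, and observe the area inequality
\begin{equation}
\A[\partial(A\cap C)] + \A[\partial(A \Cup C)] \leq \A[\partial A] + \A[\partial C]~,
\end{equation}
which holds because at any point of $\partial A \cap \partial C$ the two boundaries either coincide or contribute to both sides, while points where one boundary lies strictly inside the other region get reshuffled without increasing total area (the same reasoning used in Theorem~\ref{staticnocloning}). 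Combined with the von Neumann strong subadditivity $S(A\cap C) + S(A\Cup C) \le S(A) + S(C)$, this gives
\begin{equation}
\S(A\cap C) + \S(A \Cup C) \leq \S(A) + \S(C)~.
\end{equation}

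Next I would argue that $A\cap C$ and $A\Cup C$ are competitor wedges for $E(b)$ and $E(a\Cup b\Cup c)$ respectively. For this I need: (i) $b \subset A \cap C$, which follows since $b \subset a\Cup b \subset E(a\Cup b)=A$ and likewise $b\subset C$; (ii) $a\Cup b\Cup c \subset A\Cup C$, which follows since $a\Cup b \subset A$ and $b\Cup c\subset C$, hence $a\cup b\cup c \subset A\cup C$ and then taking $\setint\cl$ of both sides; and (iii) the conformal boundaries match, i.e. $\tilde\partial(A\cap C)=\tilde\partial b$ and $\tilde\partial(A\Cup C)=\tilde\partial(a\Cup b\Cup c)$. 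Point (iii) is where I expect to have to be careful: by Definition~\ref{ewstatic} we have $\tilde\partial A = \tilde\partial(a\Cup b)$ and $\tilde\partial C=\tilde\partial(b\Cup c)$, and I would need these conformal-boundary data to behave well under $\cap$ and $\Cup$ so that the intersection picks out exactly $\tilde\partial b$ and the union picks out $\tilde\partial(a\Cup b\Cup c)$. This presumably uses that $a$, $b$, $c$ are mutually disjoint, so their conformal boundaries interact in a controlled way; in the generic case where all the relevant conformal boundaries are disjoint pieces this is immediate, and the degenerate cases (shared conformal boundary between $a$ and $c$, say) would need a short separate argument or an additional genericity assumption.

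Granting (i)--(iii), minimality of $E(b)$ gives $\S[E(b)] \le \S(A\cap C)$ and minimality of $E(a\Cup b\Cup c)$ gives $\S[E(a\Cup b\Cup c)] \le \S(A\Cup C)$; adding these and chaining with the inequality $\S(A\cap C)+\S(A\Cup C)\le \S(A)+\S(C) = \S[E(a\Cup b)]+\S[E(b\Cup c)]$ yields exactly \eqref{ssastatic}. The main obstacle is thus not the entropic input (which is just strong subadditivity of $\S$, itself a consequence of the area term being additive on disjoint boundary pieces plus SSA of von Neumann entropy) but the bookkeeping around conformal boundaries and the verification that $A\cap C$ and $A\Cup C$ are genuinely admissible competitors — in particular checking that they are wedges (which they are, since $\cap$ and $\Cup$ of wedges are wedges by definition) with the correct conformal boundary. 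I would handle this by first presenting the argument in the clean case where $\tilde\partial a$, $\tilde\partial b$, $\tilde\partial c$ are pairwise disjoint, and then remarking on how the general case follows.
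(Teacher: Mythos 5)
Your proposal is correct and follows essentially the same route as the paper's proof: apply the cut-and-paste area inequality and von Neumann strong subadditivity to $E(a\Cup b)$ and $E(b\Cup c)$, then use minimality from Definition~\ref{ewstatic} on the competitors $E(a\Cup b)\cap E(b\Cup c)\supset b$ and $E(a\Cup b)\Cup E(b\Cup c)\supset a\Cup b\Cup c$. Your extra attention to matching the conformal boundaries of the competitor wedges is a detail the paper's proof leaves implicit, but it does not change the argument.
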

\begin{proof}
The setup is illustrated in Fig. \ref{fig:SSA}. Rearranging components of surfaces, we have
\begin{align}
    \A[E(a \Cup b)] & + \A[E(b \Cup c)] \geq \nonumber \\
    & \A[E(a \Cup b)\cap E(b \Cup c)] + \A[E(a \Cup b)\Cup E(b \Cup c)]~.
\end{align}
(This need not be an equality since the wedge union can erase boundary portions.) Strong subadditivity implies the same inequality for the von Neumann entropies, and hence for the generalized entropy: 
\begin{equation}
   \S[E(a \Cup b)] + \S[E(b \Cup c)]\geq  
   \S[E(a \Cup b)\cap E(b \Cup c)] + \S[E(a \Cup b)\Cup E(b \Cup c)]
\end{equation}
The first set on the right contains $b$ and the second contains $a\Cup b\Cup c$, so Def.~\ref{ewstatic} implies Eq.~\eqref{ssastatic}.
\end{proof}

\begin{thm}[Nesting]
\begin{equation}\label{thm:ewnstatic}
    a\subset b \implies E(a)\subset E(b)~.
\end{equation}
\end{thm}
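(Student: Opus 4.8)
The plan is to argue by contradiction using the same ``rearrangement + strong subadditivity'' move that powers the no-cloning and strong subadditivity theorems. Suppose $a\subset b$ but $E(a)\not\subset E(b)$. Then $E(a)\setminus E(b)$ is a nonempty wedge, and I want to show that replacing $E(b)$ by the wedge union $E(a)\Cup E(b)$ does not increase the generalized entropy, while replacing $E(a)$ by the intersection $E(a)\cap E(b)$ does not increase it either. First I would note the area inequality
\begin{equation}
  \A[E(a)\cap E(b)] + \A[E(a)\Cup E(b)] \leq \A[E(a)] + \A[E(b)]~,
\end{equation}
which holds because forming the intersection and wedge union can only erase boundary portions (as in Theorems \ref{staticnocloning} and \ref{thm:ssastatic}). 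Strong subadditivity of the von Neumann entropy then upgrades this to the same inequality for $\S$:
\begin{equation}
  \S[E(a)\cap E(b)] + \S[E(a)\Cup E(b)] \leq \S[E(a)] + \S[E(b)]~.
\end{equation}

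Next I would check the two containments needed to invoke Def.~\ref{ewstatic}. Since $a\subset b\subset E(b)$ and also $a\subset E(a)$, we get $a\subset E(a)\cap E(b)$; and $E(a)\Cup E(b)\supset b$. I also need the conformal boundaries to match: $\tilde\partial[E(a)\cap E(b)] = \tilde\partial a$ and $\tilde\partial[E(a)\Cup E(b)] = \tilde\partial b$. This uses $\tilde\partial E(a)=\tilde\partial a$, $\tilde\partial E(b)=\tilde\partial b$, together with $a\subset b$; here one must be slightly careful about how the conformal boundary behaves under intersection and wedge union, but since $a\subset b$ forces $\tilde\partial a\subset\tilde\partial b$, the conformal boundary of $E(a)\cap E(b)$ is pinched down to $\tilde\partial a$ and that of $E(a)\Cup E(b)$ is exactly $\tilde\partial b$.

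Given these, Def.~\ref{ewstatic} gives $\S[E(a)]\leq \S[E(a)\cap E(b)]$ (since $E(a)\cap E(b)$ is a competitor wedge for $E(a)$) and $\S[E(b)]\leq \S[E(a)\Cup E(b)]$ (competitor for $E(b)$). Adding these and comparing with the strong-subadditivity inequality above forces both to be equalities. By the uniqueness assumption in Def.~\ref{ewstatic}, $E(a)=E(a)\cap E(b)$, i.e. $E(a)\subset E(b)$, the desired conclusion.

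I expect the main obstacle to be the bookkeeping of the conformal boundary — verifying $\tilde\partial[E(a)\cap E(b)]=\tilde\partial a$ and $\tilde\partial[E(a)\Cup E(b)]=\tilde\partial b$ rigorously — rather than the entropy inequalities, which are essentially identical to those already established. A secondary subtlety is the appeal to uniqueness: if one does not assume $E$ is unique, one only concludes that $E(a)\cap E(b)$ is \emph{a} minimizer for the input $a$, which is weaker; assuming uniqueness as in Def.~\ref{ewstatic} sidesteps this cleanly.
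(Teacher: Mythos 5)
Your proposal is correct and follows essentially the same route as the paper's own proof: the area rearrangement inequality for $E(a)\cap E(b)$ and $E(a)\Cup E(b)$, upgraded to $\S$ via strong subadditivity, combined with the minimality of $E(a)$ and $E(b)$ over the competitor wedges $E(a)\cap E(b)\supset a$ and $E(a)\Cup E(b)\supset b$, forcing saturation and then $E(a)=E(a)\cap E(b)$ by uniqueness. Your extra care with the conformal-boundary bookkeeping is a point the paper passes over silently, but it does not change the argument.
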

\begin{figure}[t]
\begin{center}
  \includegraphics[width = 0.4\linewidth]{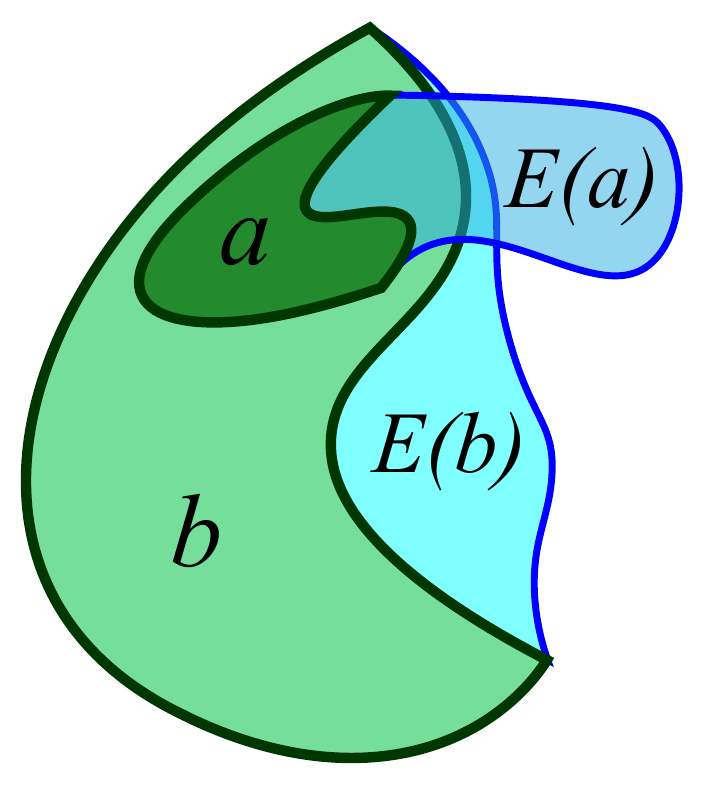} 
\end{center}
\caption{A hypothetical counterexample to Theorem \ref{thm:ewnstatic} (Nesting): if $a \subset b$ but $E(a) \not\subset E(b)$, we can replace the two entanglement wedges by $E(a) \cap E(b)$ and $E(a) \Cup E(b)$ respectively, decreasing their combined area.}
\label{fig:nesting}
\end{figure}
\begin{proof}
The setup is illustrated in Fig. \ref{fig:nesting}. Since $a\subset b$, we have $a\subset E(a)\cap E(b)$, so by Def.~\ref{ewstatic}
\begin{align}\label{ewnstaticproof1}
    \S[E(a)] & \leq \S[E(a)\cap E(b)]~,\\
    \S[E(b)] & \leq \S[E(a)\Cup E(b)]~.
\end{align}
But
\begin{equation}
    \A[E(b)] + \A[E(a)] \geq \A[E(a)\cap E(b)] + \A[E(a)\Cup E(b)]~. 
\end{equation}
(This need not be an inequality since the wedge union can erase boundary portions.) Strong subadditivity implies the same inequality for the von Neumann entropies, and hence for the generalized entropy:
\begin{equation}
    \S[E(b)] + \S[E(a)] \geq \S[E(a)\cap E(b)] + \S[E(a)\Cup E(b)]~.
\end{equation}
To avoid a contradiction, all three of the above inequalities must be saturated. The assumed uniqueness of $E(a)$ implies $E(a)\cap E(b)=E(a)$, and hence $E(a)\subset E(b)$.
\end{proof}

\section{Time-dependence}
\label{sec-covariant}

In this Section we discuss possible generalizations of our proposal to the case where $a$ does not lie on a time-reflection symmetric Cauchy surface.
%In this appendix, we explore a possible covariant extension of the generalized entanglement wedge prescription. We do not present a final answer. 
We will sketch approaches, obtain some partial results, and outline specific challenges. We focus primarily on one possible generalization -- the smallest-generalized-entropy quantum normal wedge $E_n(a)$ -- which turns out to obey a no-cloning theorem but not strong subadditivity or nesting. We briefly comment on other possible generalizations that preserve strong subadditivity and nesting respectively, but do not find any single definition that preserves all three.

\subsection{Preliminaries}
\label{sec-covdefs}

We begin by introducing natural generalizations of the key objects introduced in Sec.~\ref{sec-preliminaries}. Let $M$ be a globally hyperbolic Lorentzian spacetime with metric $g$. The chronological and causal future and past, $I^\pm$ and $J^\pm$; and the future and past domains of dependence and Cauchy horizons, $D^\pm$ and $H^\pm$; are defined as in Wald~\cite{Wald:1984rg}. Given $s\subset M$, we use $\partial s$ to denote the boundary of $s$ in $M$, and $s'$ to denote the interior of the set of points that are spacelike related to all points in $s$, that is, points outside the causal future and past of $s$.

\begin{defn}\label{def:covwedge}
A {\em wedge} is a set $a\subset M$ that satisfies $a=a''$ (see Fig. \ref{fig-wedges}, left). 
\end{defn}
\begin{figure}[t]
\begin{subfigure}{.48\textwidth}
  \centering
 \includegraphics[width = 0.8\linewidth]{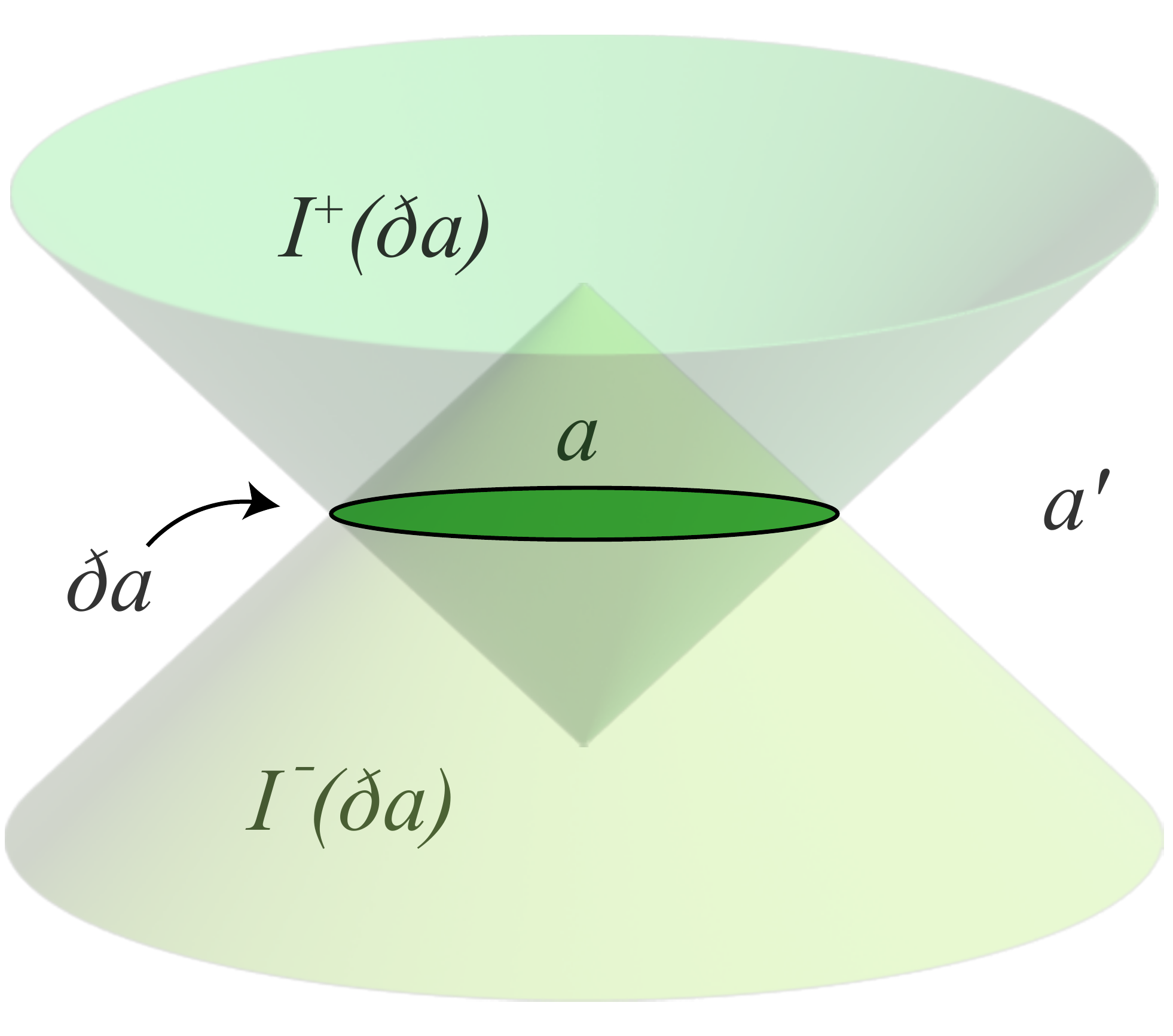}
\end{subfigure}
\begin{subfigure}{.48\textwidth}
  \centering
 \includegraphics[width = 0.8\linewidth]{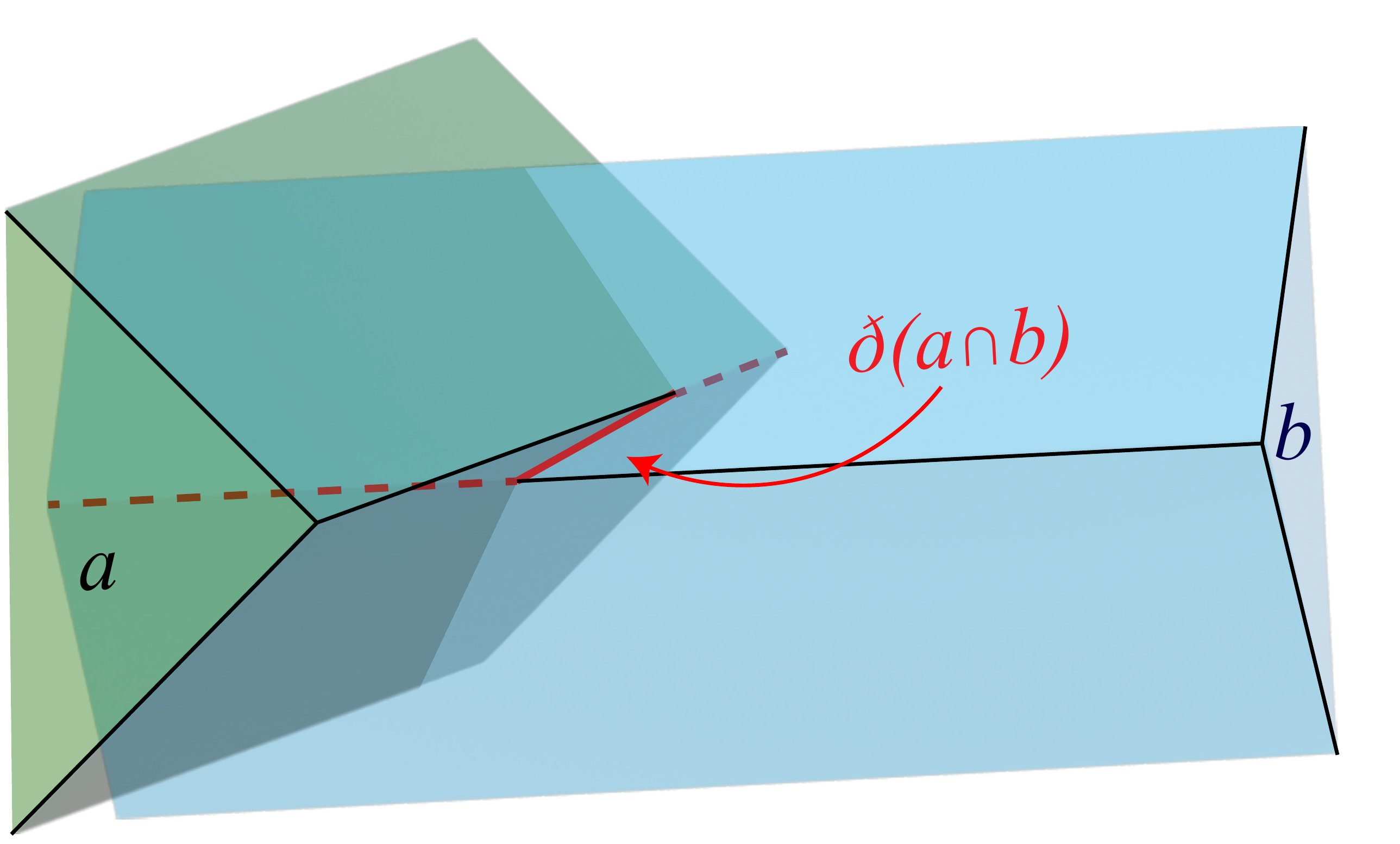}
\end{subfigure}
\caption{\emph{Left:} A spherical wedge $a$, its complement wedge $a'$, and their shared edge $\eth a$ in Minkowski space. The wedge $a$ is diamond-shaped in this spacetime diagram; a Cauchy slice of $a$ is shown in dark green. This wedge is ``normal,'' since outgoing orthogonal lightrays expand.  \emph{Right:} The intersection of two wedges is itself a wedge, with an edge that decomposes as $\eth(a \cap b) = \cl\left[(\eth a\cap b) \sqcup (H^+(a)\cap H^-(b))\sqcup \set{a\leftrightarrow b}\right]$.}
\label{fig-wedges}
\end{figure}

\begin{rem}\label{wilem}
It can be shown that the intersection of two wedges $a,b$ is a wedge (see Fig. \ref{fig-wedges}, right); and the \emph{complement wedge} $a'$ is a wedge:
\begin{equation}
    (a\cap b)'' = a\cap b~;~~a'''=a'~.
\end{equation}
\end{rem}

\begin{defn}
Given two wedges $a$ and $b$, we define the {\em relative complement of $b$ in $a$} as $a\cap b'$, and the {\em wedge union} as $a\Cup b\equiv (a'\cap b')'$ (see Fig. \ref{fig:union}); these objects are wedges by the above remark.
\end{defn}
\begin{rem}%%%\label{edgeunion}
The wedge union satisfies $a\Cup b \supset a\cup b$. It is minimal in the sense that any wedge that contains $a\cup b$ must contain $a\Cup b$.
\end{rem}
\begin{defn}%%%\label{def:edge}
The {\em edge} $\eth a$ of a wedge $a$ is 
defined by $\eth a\equiv \partial a \cap \partial a'$. Note that $a$ is fully characterized by specifying $\eth a$ and one spatial side of $\eth a$.
\end{defn}

\begin{defn}
The \emph{area} and \emph{generalized entropy} of a wedge $a$ are defined as in the static case, Defs.~\ref{def:area} and \ref{def:sgen}, with $\partial a$ replaced by $\eth a$. Note that neither depend on a choice of Cauchy slice of $a$.
\end{defn}

\begin{defn}\label{def:conformaledge}
Given a wedge $a$, we distinguish between its edge $\eth a$ in $M$ and its edge $\deltabar a$ as a subset of the conformal completion $\tilde M$. The latter can contain an additional piece, the {\em conformal edge}
\begin{equation}
    \tilde\eth a\equiv \deltabar a\cap \partial\tilde M~.
\end{equation}
\end{defn}

\begin{figure}[t]
\begin{center}
  \includegraphics[width=0.8\linewidth]{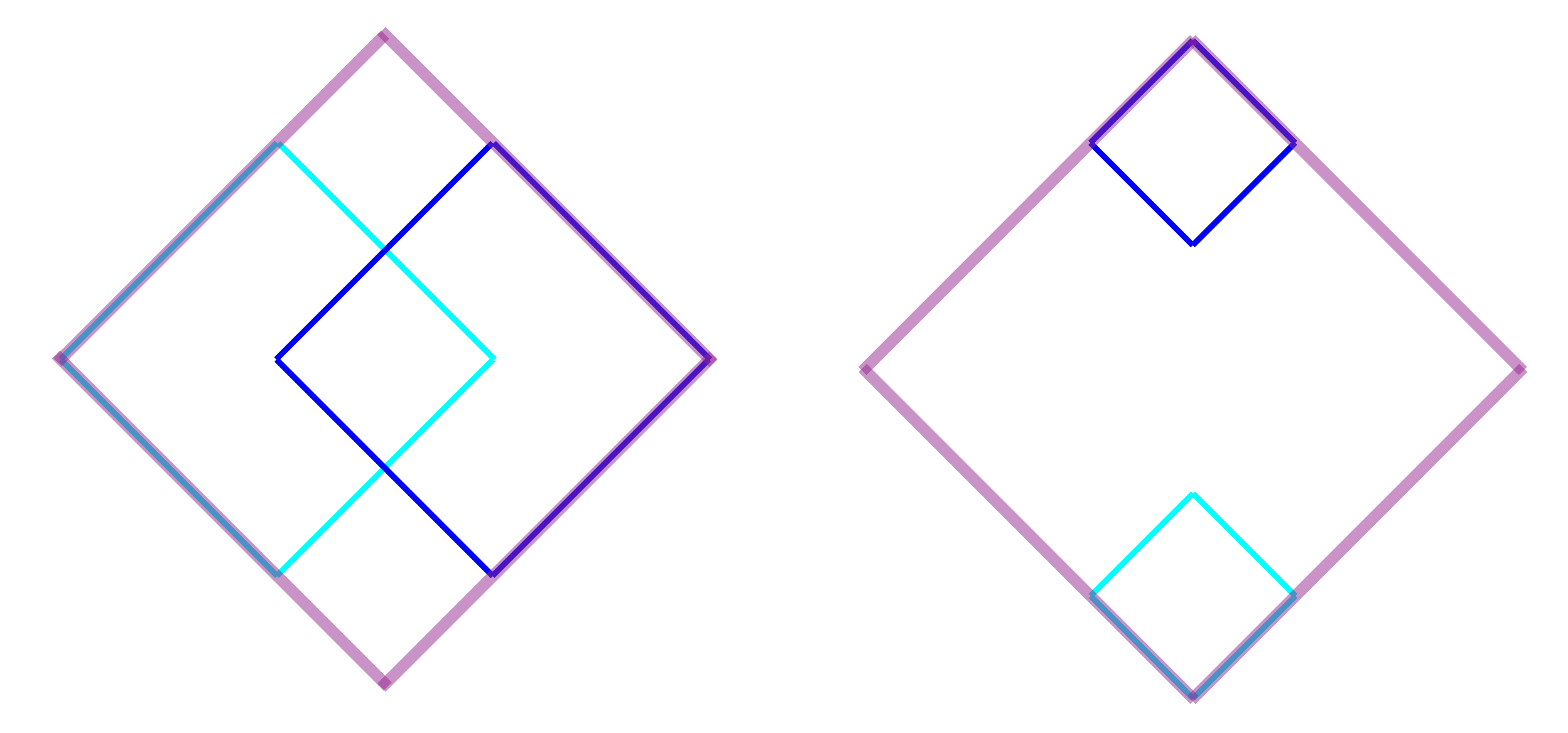} 
\end{center}
\caption{The wedge union (thick purple diamond) of two wedges (blue diamonds) is defined as the wedge complement of the intersection of their wedge complements. Two examples are shown.}
\label{fig:union}
\end{figure}
\begin{defn}
Given a wedge $a$ and a point $p\in \eth a$, the past (future) {\em quantum expansion}, $\Theta^-(a,p)$ ($\Theta^+(a,p)$), is the shape derivative of the generalized entropy under outward deformations of $a$ along the past (future) null vector field orthogonal to $\eth a$ at $p$. A precise definition can be given in terms of a functional derivative:
\begin{equation}
    \Theta^\pm(a,p)\equiv 4G\hbar \, \frac{\delta S_{\rm gen}[a(X^\pm(p))]}{\delta X^\pm(p)}~.
\end{equation}
Here $X^\pm$ are null coordinates orthogonal to $\eth a$, and $a(X^\pm(p))$ are wedges obtained by deforming $\eth a$ along them. See Ref.~\cite{Bousso:2015mna} for further details.
\end{defn}

\begin{rem}
By Eq.~\eqref{sgendef}, the quantum expansion can be decomposed into a classical and a quantum piece:
\begin{equation} \label{eq:qexpansion}
    \Theta^\pm(a,p) = \theta(p)+4G\hbar\, \frac{\delta S[a(X^\pm(p))]}{\delta X^\pm(p)}~.
\end{equation}
The first term is the classical expansion~\cite{Wald:1984rg}, which depends only on the shape of $\eth a$ near $p$. The second term in Eq.~\eqref{eq:qexpansion} is nonlocal.
\end{rem}

\begin{lem}\label{subsetexplem}
Let $a\subset b$ be wedges whose edges coincide in an open neighborhood $O$. At any point $p\in \eth a \cap O$, 
\begin{equation}%%%\label{subsetexp}
    \Theta^\pm(a,p)\geq \Theta^\pm(b,p)~.
\end{equation}
\end{lem}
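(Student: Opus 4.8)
The plan is to decompose the quantum expansion into its classical and quantum pieces, as in Eq.~\eqref{eq:qexpansion}, and treat each separately. Since $\eth a$ and $\eth b$ agree on the open neighborhood $O$, the \emph{classical} expansions $\theta(p)$ computed from $\eth a$ and from $\eth b$ are literally identical at any $p\in\eth a\cap O$: the classical expansion depends only on the local extrinsic geometry of the cut near $p$, and the cuts coincide there. So the entire content of the lemma is the corresponding inequality for the quantum (entropy) term,
\begin{equation}
    \frac{\delta S[a(X^\pm(p))]}{\delta X^\pm(p)}\ \geq\ \frac{\delta S[b(X^\pm(p))]}{\delta X^\pm(p)}~,
\end{equation}
where on both sides we deform the common edge outward along the same null vector field orthogonal to $\eth a=\eth b$ at $p$.

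The key step is strong subadditivity of the von Neumann entropy, applied in its differential (monotonicity-of-relative-entropy) form. Fix the null direction, say the future one, and let $a_\lambda$, $b_\lambda$ denote the wedges obtained by pushing the common edge outward by an amount $\lambda$ (supported near $p$) along $X^+$. Because the deformation is an \emph{outward} one and $a\subset b$ with coincident edges near $p$, we have $a\subset a_\lambda$, $b\subset b_\lambda$, $a_\lambda\setminus a = b_\lambda\setminus b$ (the same sliver of region is added in each case), and crucially $a_\lambda\subset b_\lambda$ still holds while $a\subset b$. The four regions $a$, $b$, $a_\lambda$, $b_\lambda$ then satisfy the hypotheses of strong subadditivity in the form $S(a_\lambda)+S(b)\geq S(a)+S(b_\lambda)$ — equivalently, that the entropy \emph{increment} from adding the sliver is larger for the smaller region $a$ than for the larger region $b$. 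Dividing by $\lambda$ and taking $\lambda\to 0^+$ gives exactly the desired inequality between the one-sided shape derivatives. The past case is identical with $X^+\to X^-$.

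The main obstacle is making the differential version of strong subadditivity rigorous at the level of shape derivatives: one must argue that the finite-$\lambda$ inequality $S(a_\lambda)-S(a)\geq S(b_\lambda)-S(b)$ survives division by $\lambda$ and the limit, i.e., that the relevant one-sided derivatives exist and the inequality is preserved. This is the standard subtlety underlying the definition of the quantum expansion itself (see Ref.~\cite{Bousso:2015mna}), so I would invoke the existence of $\Theta^\pm$ as already granted by the definition, and simply note that strong subadditivity holds at each finite $\lambda$ and hence in the limit. A secondary point to check is the geometric claim that an outward null deformation of the common edge produces the \emph{same} added region for $a$ and for $b$ near $p$ and keeps the inclusion $a_\lambda\subset b_\lambda$; this follows because the deformation is determined entirely by $\eth a=\eth b$ and the chosen orthogonal null field on $O$, so away from $O$ nothing changes and $a\subset b$ is untouched, while on $O$ both edges move identically.
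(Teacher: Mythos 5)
Your proposal is correct and follows essentially the same route as the paper: the classical expansions agree because the edges (and hence their local shape) coincide near $p$, and the inequality for the entropy term follows from strong subadditivity applied to the common sliver added by an infinitesimal outward null deformation, exactly the argument the paper invokes via Fig.~3a of Ref.~\cite{Bousso:2015mna}. Your version merely spells out the SSA step $S(a_\lambda)+S(b)\geq S(a)+S(b_\lambda)$ explicitly, which the paper leaves implicit.
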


\begin{proof}
Since $\eth a$ and $\eth b$ coincide near $p$, their shape, and hence their classical expansion $\theta$, will agree at $p$.
The Lemma then follows from strong subadditivity of the von Neumann entropy, applied to an infinitesimal deformation of $a$, and of $b$, at $p$; see Fig. 3a in Ref.~\cite{Bousso:2015mna}.
\end{proof}

\begin{defn}
The wedge $a$ is called {\em quantum-normal} at $p\in \eth a$ if $\Theta^+(a,p)\geq 0$ and $\Theta^-(a,p)\geq 0$. Other combinations of signs correspond to \emph{quantum-antinormal} ($\leq,\leq$), \emph{quantum-trapped} ($\leq,\geq$), \emph{quantum-antitrapped} ($\geq,\leq$) and \emph{quantum extremal} ($=,=$). Marginal cases arise if one expansion vanishes at $p$. In relations that hold for all $p\in \eth a$, we drop the argument $p$.
\end{defn}

\begin{conj}[Quantum Focussing Conjecture] \label{conj-qfc}
At all orders in $G\hbar$~\cite{Bousso:2015mna},
\begin{equation}\label{qfc}
   \frac{\delta \Theta(a,p)}{\delta X^\pm(\bar{p})}\leq 0~.
\end{equation}
\end{conj}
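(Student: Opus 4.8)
Strictly speaking \eqref{qfc} is an open conjecture, so the realistic target is to reduce it to one sharper input and prove it wherever that input is available. The plan is to use the decomposition \eqref{eq:qexpansion} of the quantum expansion into a local geometric term $\theta$ and a nonlocal entropic term $4G\hbar\,\delta S/\delta X^\pm$, and to treat the cases $p\neq\bar p$ and $p=\bar p$ separately.

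First I would dispose of $p\neq\bar p$, where the statement is essentially Lemma~\ref{subsetexplem}: an infinitesimal outward deformation of $a$ supported near $\bar p$ produces a wedge $b\supset a$ whose edge agrees with $\eth a$ throughout a neighborhood of $p$, so Lemma~\ref{subsetexplem} gives $\Theta^\pm(b,p)\le\Theta^\pm(a,p)$, which is exactly \eqref{qfc} for $\bar p\neq p$. Under \eqref{eq:qexpansion} the geometric term drops out (the classical expansion at $p$ is insensitive to a deformation at $\bar p$), so this amounts to the non-positivity of the off-diagonal second shape derivative of the von Neumann entropy, i.e. strong subadditivity.

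For $p=\bar p$ the geometric term survives as the Raychaudhuri contribution $-\theta^2/(d-2)-\sigma_{ab}\sigma^{ab}-R_{kk}$ along the relevant null generator $k$. I would then use the semiclassical Einstein equation to trade $R_{kk}$ for $8\pi G\langle T_{kk}\rangle$, reducing \eqref{qfc} to $-8\pi G\langle T_{kk}\rangle+4G\hbar\,S''\le 0$ with $S''$ the diagonal second shape derivative of $S$ --- that is, to the Quantum Null Energy Condition $\langle T_{kk}\rangle\ge(\hbar/2\pi)\,S''$. Given QNEC, the leftover terms $-\theta^2/(d-2)-\sigma_{ab}\sigma^{ab}$ are manifestly non-positive, and \eqref{qfc} follows.

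The hard part is the QNEC input: there is no general nonperturbative proof on a dynamical curved background, so one must either restrict to regimes where QNEC is a theorem --- free and superrenormalizable fields, holographic CFTs, or the modular-theory argument based on monotonicity of relative entropy --- or adopt QNEC, equivalently QFC, as the governing semiclassical assumption; this is what ``at all orders in $G\hbar$'' in the statement is meant to flag. A secondary subtlety is that the split of $\Theta^\pm$ into $\theta$ and $\delta S/\delta X^\pm$ is cutoff-dependent, so one should check that the area counterterm in $\S$ and the renormalization of $G$ together make the above decomposition, and hence the QNEC inequality, well defined order by order in $G\hbar$, along the lines of the appendix of Ref.~\cite{Bousso:2015mna}.
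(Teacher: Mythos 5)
The paper offers no proof of this statement: it is stated as a conjecture, and the accompanying remark only observes (i) that the off-diagonal case $p\neq\bar p$ follows from Lemma~\ref{subsetexplem}, and (ii) that the QNEC is a quantum field theory \emph{limit} of the QFC which has been proven, while the general diagonal case $p=\bar p$ remains open. Your handling of the off-diagonal case is exactly the paper's: an outward deformation supported near $\bar p$ yields $b\supset a$ with $\eth b=\eth a$ near $p$, and Lemma~\ref{subsetexplem} (i.e.\ strong subadditivity) gives the sign.

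The genuine gap is in your diagonal step. The claim that ``given QNEC, \eqref{qfc} follows'' is too strong, and if it were right it would contradict the paper's own assessment that the diagonal case remains a conjecture even though QNEC is now a theorem for general QFTs. The quantum expansion is a functional derivative of $\S$ \emph{per unit area}, so when you differentiate $\Theta^\pm=\theta+4G\hbar\,\delta S/\delta X^\pm$ a second time along the generator, differentiating the area element produces a cross term proportional to $-\theta\,\delta S/\delta X^\pm$ in addition to the Raychaudhuri terms and the diagonal second derivative $S''$. That cross term is not sign-definite: $\theta$ and $\delta S/\delta X^\pm$ can have either sign independently. QNEC is precisely the statement one obtains in the limit $\theta=\sigma=0$, where the cross term and the $-\theta^2/(d-2)$ term drop out; away from that limit QNEC does not imply the diagonal QFC, which is why the latter is still open. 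Relatedly, your phrase ``QNEC, equivalently QFC'' is incorrect --- QNEC is a special case, not an equivalent reformulation. A correct write-up would stop at: the off-diagonal case is a theorem via Lemma~\ref{subsetexplem}; the diagonal case implies QNEC in the stationary limit and is supported by the QNEC proofs, but does not reduce to them.
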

\begin{rem}
In the ``off-diagonal'' case, $p\neq \bar{p}$, the Quantum Focussing Conjecture (QFC) follows from Lemma~\ref{subsetexplem}. A quantum field theory limit of the QFC, the Quantum Null Energy Condition~\cite{Bousso:2015mna}, was proven in Refs.~\cite{Bousso:2015wca, Balakrishnan:2017bjg}; see also Refs.~\cite{Koeller:2015qmn, Wall:2017blw, Ceyhan:2018zfg, Balakrishnan:2019gxl}. The general diagonal case, $p=\bar{p}$, remains a conjecture.
\end{rem}

\begin{lem}\label{qnormalwi}
The intersection of two quantum-normal wedges is quantum-normal.
\end{lem}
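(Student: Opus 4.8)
The plan is to show that at each point of the edge of $a \cap b$, both future and past quantum expansions of $a\cap b$ are nonnegative, using the decomposition of $\eth(a\cap b)$ from Fig.~\ref{fig-wedges} (right): $\eth(a\cap b) = \cl\left[(\eth a\cap b)\sqcup (H^+(a)\cap H^-(b))\sqcup \set{a\leftrightarrow b}\right]$. The idea is to split into cases according to which piece of this decomposition a given edge point $p$ belongs to, and in each case compare $\Theta^\pm(a\cap b,p)$ to $\Theta^\pm(a,p)$ or $\Theta^\pm(b,p)$, which are nonnegative by hypothesis. First I would handle the generic case $p\in \eth a \cap b$ (and symmetrically $p\in \eth b\cap a$): here, in an open neighborhood $O$ of $p$, the edges $\eth(a\cap b)$ and $\eth a$ coincide, since near $p$ the constraint from $b$ is inactive (as $p$ is in the interior of $b$). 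Moreover $a\cap b\subset a$. Hence Lemma~\ref{subsetexplem} applies directly and gives $\Theta^\pm(a\cap b,p)\geq \Theta^\pm(a,p)\geq 0$.

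The remaining case is $p$ in (the closure of) the ``corner'' piece $H^+(a)\cap H^-(b)$ — points where both edges are active — together with limit points where $\eth a$ and $\eth b$ themselves meet. On $H^+(a)\cap H^-(b)$, the edge $\eth(a\cap b)$ lies along the future Cauchy horizon of $a$ and the past Cauchy horizon of $b$. Here I expect to argue as follows: the outward future null direction orthogonal to $\eth(a\cap b)$ at such a $p$ is, up to the relevant identification, the outward future null direction of the wedge $b$ (whose past Cauchy horizon contains $p$), so $\Theta^+(a\cap b,p)$ should be controlled by $\Theta^+(b,\cdot)$; symmetrically, the outward past null direction is that of $a$, so $\Theta^-(a\cap b,p)$ is controlled by $\Theta^-(a,\cdot)$. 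More carefully, one deforms $\eth(a\cap b)$ along its orthogonal future null congruence; this deformation, restricted appropriately, either coincides with or is a sub-deformation of a corresponding deformation of $\eth b$, so a monotonicity argument in the spirit of Lemma~\ref{subsetexplem} (again via strong subadditivity of the von Neumann entropy applied to the infinitesimal deformation, plus the fact that the relevant classical expansions agree or satisfy the needed inequality because of the null-hypersurface geometry) yields $\Theta^+(a\cap b,p)\geq \Theta^+(b,p)\geq 0$, and dually for $\Theta^-$. Marginal/limit points where the various pieces meet are handled by continuity of the quantum expansion along the edge.

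The main obstacle is precisely this corner case: making rigorous the claim that the orthogonal null deformation of $\eth(a\cap b)$ at a point on $H^+(a)\cap H^-(b)$ relates in a controlled way to a deformation of $\eth b$ (resp.\ $\eth a$), given that near such points the edge is generically not smooth and the two contributing null hypersurfaces meet at an angle. One must check that the classical expansion term behaves correctly — intuitively the corner is a ``kink'' and focussing should only help — and that the nonlocal von Neumann entropy term is handled by the correct application of strong subadditivity for the nested regions $a\cap b\subset b$ (and $a\cap b\subset a$) even though their edges only partially coincide. I would expect that a careful local analysis, possibly smoothing the corner and taking a limit, or invoking the QFC (Conjecture~\ref{conj-qfc}) in its already-established off-diagonal form to control the interplay of the two null directions, closes this gap; the time-reflection-symmetric intuition is that intersecting normal regions cannot produce an anti-normal corner.
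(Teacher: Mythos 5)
Your decomposition of $\eth(a\cap b)$ and your treatment of the generic points $p\in\eth a\cap b$ (and symmetrically $p\in a\cap\eth b$) via Lemma~\ref{subsetexplem} are exactly what the paper does. The corner case $p\in H^\pm(a)\cap H^\mp(b)$ is indeed the crux, and there your argument has a genuine gap. The paper closes it by an explicit construction you only gesture at: define auxiliary wedges $a_p\subset a$ and $b_p\subset b$ by sliding $\eth a$ and $\eth b$ along the null generators of $H^\pm(a)$ and $H^\mp(b)$ until their edges coincide with the corner surface $H^\pm(a)\cap H^\mp(b)$ in a neighbourhood of $p$. Since $a\cap b$ is contained in each of $a_p$ and $b_p$ with locally coinciding edges, Lemma~\ref{subsetexplem} then bounds the two quantum expansions of $a\cap b$ at $p$ from below by those of $a_p$ and $b_p$. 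This construction also dissolves your worry about the kink: one never differentiates $S_{\rm gen}$ at the non-smooth edge of $a\cap b$ relative to $a$ or $b$ directly; one compares to the deformed wedges whose edges agree with $\eth(a\cap b)$ near $p$, and the mismatch away from $p$ is exactly what the strong-subadditivity argument behind Lemma~\ref{subsetexplem} absorbs.

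The step you cannot close with the tools you propose is showing that $\Theta(a_p,p)\geq 0$ and $\Theta(b_p,p)\geq 0$ in the first place. This requires propagating nonnegativity of the quantum expansion from $\eth a$ (resp.\ $\eth b$) along a single null generator of the Cauchy horizon to the corner point $p$, i.e., tracking how the expansion in a given null direction evolves under a finite deformation of the edge along that same direction. That is precisely the diagonal case $p=\bar p$ of Conjecture~\ref{conj-qfc}, which the paper's remark explicitly notes does \emph{not} follow from Lemma~\ref{subsetexplem} and remains a conjecture. Your suggestion that the ``already-established off-diagonal form'' of the QFC might suffice is therefore false: off-diagonal deformations control the expansion at $p$ only under deformations at \emph{other} generators, not along the generator through $p$ itself. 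The lemma as proved in the paper is conditional on the full QFC, and any correct proof of the corner case must invoke that conjectural ingredient (or an equivalent). A minor point: your pairing of the two null directions at the corner with $a$ versus $b$ is stated the opposite way to the paper's; this is a labelling/convention issue and does not affect the structure of the argument.
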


\begin{proof}
Let $a$ and $b$ be quantum-normal. We use the decomposition of $\eth(a \cap b)$ described in Fig. \ref{fig-wedges}. For $p \in \eth a\cap b$ (or $p \in a\cap \eth b$), the condition $\Theta^\pm(a \cap b, p) \geq 0$ follows from the normalcy of $a$ (or $b$) at $p$ and Lemma~\ref{subsetexplem}.  For $p\in H^\pm (a)\cap H^\mp(b)$, let $a_p \subset a$ and $b_p \subset b$ be deformations of $a$ and $b$ along $H^\pm (a)$ and $H^\mp (b)$ respectively such that $\eth a_p, \eth b_p$ coincide with $H^\mp (a)\cap H^\pm(b)$ in a neighbourhood of $p$. Since $\Theta^\pm(a) \geq 0$ and  $\Theta^\mp(b) \geq 0$ in a neighbourhood of the points lightlike separated from $p$, Conjecture~\ref{conj-qfc} implies that $\Theta^\pm(a_p,p)\geq 0$ and  $\Theta^\mp(b_p,p) \geq 0$. Hence $\Theta^\pm(a \cap b, p) \geq 0$ by Lemma~\ref{subsetexplem}. 
\end{proof}

\subsection{Smallest-Generalized-Entropy Normal Wedge}
\label{sec-ssnw}

\begin{defn}\label{qew}
Given a wedge $a$, let $E_n(a)$ denote the quantum-normal wedge that contains $a$, has the same conformal boundary as $a$, and has the smallest generalized entropy among all such wedges. As in the static case (Def.~\ref{ewstatic}), we assume without proof that this wedge exists, and we assume for convenience that it is unique.
\end{defn}

\begin{lem}[Monotonicity]
Let $a$ and $b$ be wedges with the same conformal boundary, $\tilde\eth a=\tilde\eth b$. Then 
\begin{equation}
    a\subset b\implies \S[E_n(a)]\leq \S[E_n(b)]~.
\end{equation}
\end{lem}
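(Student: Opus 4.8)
The plan is to mimic the one-line argument used for the static Monotonicity lemma (Lemma~\ref{monolem}), since the definition of $E_n$ (Def.~\ref{qew}) has exactly the same structure as the definition of $E$ (Def.~\ref{ewstatic}): both are minimizers of $\S$ over a family of wedges that (i) contain the input wedge and (ii) share its conformal boundary, the only difference being that $E_n$ additionally restricts to quantum-normal wedges. So the key observation is that $E_n(b)$ is itself an admissible competitor in the minimization defining $E_n(a)$.

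Concretely, I would proceed as follows. First, unpack Def.~\ref{qew} for $b$: $E_n(b)$ is a quantum-normal wedge with $E_n(b)\supset b$ and $\tilde\eth E_n(b)=\tilde\eth b$. Second, use the hypothesis $a\subset b$ to chain inclusions, $a\subset b\subset E_n(b)$, so $a\subset E_n(b)$; and use the hypothesis $\tilde\eth a=\tilde\eth b$ together with $\tilde\eth E_n(b)=\tilde\eth b$ to conclude $\tilde\eth E_n(b)=\tilde\eth a$. Third, observe that the three facts just established — $E_n(b)$ is quantum-normal, contains $a$, and has the same conformal boundary as $a$ — are precisely the membership conditions for the family over which $\S$ is minimized in the definition of $E_n(a)$. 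Hence $E_n(b)$ is one of the wedges competing with $E_n(a)$, and minimality of $E_n(a)$ gives $\S[E_n(a)]\leq \S[E_n(E_n(b))]$... more simply, $\S[E_n(a)]\leq \S[E_n(b)]$ directly, since $E_n(a)$ minimizes $\S$ over that family.

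I do not anticipate any real obstacle: unlike the no-cloning, strong subadditivity, and nesting theorems in the static case, this statement does not require combining two optimal wedges and invoking strong subadditivity of the von Neumann entropy, nor does it need Lemma~\ref{qnormalwi} or the QFC. The only point that warrants a moment's care is checking that ``same conformal boundary'' is used consistently and transitively — i.e., that $\tilde\eth E_n(b)=\tilde\eth b$ and $\tilde\eth a=\tilde\eth b$ indeed combine to $\tilde\eth E_n(b)=\tilde\eth a$ — but this is immediate from equality of sets. Thus the proof should be essentially a one-liner, identical in spirit to the proof of Lemma~\ref{monolem}, reading: $E_n(b)\supset b\supset a$ is a quantum-normal wedge with $\tilde\eth E_n(b)=\tilde\eth a$, so $\S[E_n(a)]\leq \S[E_n(b)]$ by Def.~\ref{qew}.
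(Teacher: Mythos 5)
Your proof is correct and is exactly the argument the paper intends: the paper's proof simply says it is analogous to the static Lemma~3.12, i.e.\ $E_n(b)\supset b\supset a$ is itself an admissible (quantum-normal, same conformal boundary) competitor in the minimization defining $E_n(a)$. Your extra check that the quantum-normality and conformal-boundary conditions are preserved is the right point of care and is handled correctly.
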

\begin{proof}
The proof is analogous to the static case, Lemma \ref{monolem}.
\end{proof}

\begin{lem}[Quantum Expansion] \label{qeqe}
By definition, $E_n$ is quantum normal. Marginal cases arise at points $p\in\eth[E_n(a)]$ that do not lie on $\eth a$:
\begin{itemize}
    \item $E_n(a)$ is marginally quantum-antitrapped, $\Theta^-[E_n(a),p]=0$, at $p\in H^+(a')$.
    \item $E_n(a)$ is marginally quantum-trapped, $\Theta^+[E_n(a),p]=0$, at points $p\in H^-(a')$.
    \item $E_n(a)$ is quantum extremal, $\Theta^+[E_n(a),p]=\Theta^-[E_n(a);p]=0$, at points $p\in a'$.
\end{itemize} 
\end{lem}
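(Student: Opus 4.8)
The plan is to analyze each of the three marginal cases separately, in each case using the extremality/minimality of $E_n(a)$ together with the quantum-normalcy constraint to pin down the sign of the relevant quantum expansion. The basic strategy is a first-order variational argument: if at some $p\in\eth[E_n(a)]\setminus\eth a$ one of the quantum expansions were strictly positive, we could deform $E_n(a)$ \emph{inward} at $p$ (shrinking the wedge) while staying above $a$ and preserving the conformal boundary; this would strictly decrease $S_{\rm gen}$, contradicting Def.~\ref{qew}. Combined with the quantum-normalcy requirement $\Theta^\pm\geq 0$, this forces the expansion to be exactly zero wherever an inward deformation is admissible.

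\bigskip

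\noindent\textbf{The three cases.} First I would handle the interior case $p\in a'$: here a neighborhood of $p$ in $\eth[E_n(a)]$ is spacelike-separated from $a$, so $\eth[E_n(a)]$ can be deformed inward freely in \emph{both} null directions near $p$ without violating $a\subset E_n(a)$ or changing $\tilde\eth$. Minimality gives $\Theta^\pm[E_n(a),p]\leq 0$; combined with quantum-normalcy $\Theta^\pm\geq 0$ this yields $\Theta^+[E_n(a),p]=\Theta^-[E_n(a),p]=0$, i.e.\ quantum extremal. Next, the Cauchy-horizon cases: for $p\in H^+(a')$, the point $p$ lies to the future of $a'$, so an inward deformation along the \emph{past} null direction $X^-$ remains inside $a'$ (hence admissible), giving $\Theta^-[E_n(a),p]\leq 0$ and thus $\Theta^-[E_n(a),p]=0$ by normalcy; a past-directed inward deformation would generically cross into the causal future of $a$, so no constraint beyond $\Theta^+\geq 0$ is obtained and $E_n(a)$ is only marginally quantum-antitrapped. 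The case $p\in H^-(a')$ is the time-reverse: deform inward along $X^+$ to get $\Theta^+[E_n(a),p]=0$, leaving $E_n(a)$ marginally quantum-trapped.

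\bigskip

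\noindent\textbf{The main obstacle.} The delicate point is making the ``inward deformation is admissible'' claim precise and ensuring that the deformed region is still a \emph{wedge} with the same conformal boundary and still contains $a$, so that it is a legitimate competitor in the minimization of Def.~\ref{qew}. One must check that a small inward deformation supported near $p\in H^\pm(a')$ (or $p\in a'$) does not push $\eth[E_n(a)]$ into $a$ or across the conformal boundary; this is where the characterization of $H^\pm$ and the openness of $a'$ are used, and where one should be careful that $H^+(a')$ really is the locus at which only the past expansion can be freely lowered. A secondary subtlety is that the deformed configuration need not \emph{a priori} be quantum-normal — but since we are deriving a contradiction with minimality, it suffices to exhibit \emph{any} competitor wedge containing $a$ with the same conformal boundary and strictly smaller $S_{\rm gen}$; the candidate for $E_n(a)$ need not itself be normal. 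Here one can either allow general competitors in the argument, or argue (using Lemma~\ref{subsetexplem} and Conjecture~\ref{conj-qfc} as in Lemma~\ref{qnormalwi}) that a sufficiently small inward deformation of a quantum-normal wedge away from $\eth a$ can be taken to remain quantum-normal, so the contradiction is internal to the class in Def.~\ref{qew}. I would also remark that the signs in the statement are exactly those consistent with the quantum maximin picture and with the classical HRT case, providing a useful consistency check on the orientation conventions.
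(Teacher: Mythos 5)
Your overall strategy (a first-order variational argument at points of $\eth[E_n(a)]\setminus\eth a$) is the right starting point, but the logical mechanism you use to extract the conclusion is not the one that actually works, and it produces the wrong sign assignments. Consider $p\in H^+(a')$. The inward deformation of $\eth[E_n(a)]$ that keeps the deformed wedge containing $a$ is the \emph{past-directed} one that slides $p$ down the generator of $H^+(a')$ toward $\eth a$; the future-directed inward deformation pushes the complement wedge beyond $H^+(a')$, out of $a'$, and is therefore inadmissible. Crucially, the admissible (past-inward) deformation is the reversal of the \emph{future-outward} direction, so the first-order change in $\S$ it produces is $-\Theta^+\,\delta X^+$, not $-\Theta^-\,\delta X^-$. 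Your scheme ``minimality forces the expansion in the admissible direction to be $\leq 0$, normalcy forces it to be $\geq 0$'' would therefore yield $\Theta^+[E_n(a),p]=0$ at $p\in H^+(a')$ — but the lemma asserts $\Theta^-[E_n(a),p]=0$ there and explicitly permits $\Theta^+>0$ (that is what ``marginally quantum-antitrapped'' means). So your argument, executed with the correct admissible direction, proves a different (and generally false) statement.

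The reason the naive minimality argument fails is precisely the point you flag as a ``secondary subtlety'' and then dismiss: Def.~\ref{qew} minimizes $\S$ only over \emph{quantum-normal} wedges, so a non-normal competitor with smaller $\S$ is not a contradiction, and your first proposed fix (``allow general competitors'') is simply not available. Your second fix (arrange for the deformation to remain quantum-normal) also cannot work in general, because the entire content of the horizon cases is that the deformation \emph{must fail} to remain normal. The paper's proof turns this failure into the conclusion: by normalcy, the admissible inward deformation weakly decreases $\S$ (since $\Theta^+\geq 0$), and by Lemma~\ref{subsetexplem} the deformed wedge stays quantum-normal everywhere except possibly at the deformed point; the only way it can avoid being a legitimate, no-worse competitor is for normalcy to fail \emph{at} $p$, i.e.\ for the expansion in the \emph{other} null direction, $\Theta^-$, to become negative immediately under the deformation. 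Since $\Theta^-\geq 0$ before deforming, continuity forces $\Theta^-[E_n(a),p]=0$. The $H^-(a')$ case is the time reverse, and at $p\in a'$ both inward deformations are admissible so the argument applies in both directions. To repair your proof you would need to replace your central inference with this one; as written, the horizon cases are wrong in both the identification of which expansion is constrained and the mechanism by which it is constrained.
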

\begin{proof}
By quantum normalcy of $E(a)$, $\S$ decreases when $\eth E(a)$ is deformed towards $\eth a$ at $p$. Lemma \ref{subsetexplem} ensures quantum normalcy of $\eth E(a)$ away from $p$ under this inward deformation. This conflicts with the $\S$-minimization requirement in Def.~\ref{qew}, unless the quantum expansion at $p$ in the other null direction immediately becomes negative under the deformation. 
\end{proof}

\begin{defn}[Entanglement wedge of a boundary region] \label{defn:bdyqew}
Let $B \subset \partial \tilde{M}$ be a subregion of a Cauchy surface of the conformal boundary  $\partial \tilde{M}$. The \emph{entanglement wedge of} of $B$, $\mathrm{EW}(B)\subset M$, is the smallest-generalized-entropy quantum-extremal wedge with conformal edge $B$~\cite{Engelhardt:2014gca}.
\end{defn}

\begin{lem}[Reduction of $E_n$ to $\mathrm{EW}$]\label{lem:qgenew=ew}
Let $a$ be contained in the ordinary quantum entanglement wedge of its conformal edge $\tilde\eth a$, \emph{i.e.}, $a \subset \mathrm{EW}(\tilde\eth a)$. Then
\begin{align}
E_n(a) = \mathrm{EW}(\tilde\eth a).
\end{align}
\end{lem}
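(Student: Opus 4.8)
The plan is to show that $\mathrm{EW}(\tilde\eth a)$ is an admissible competitor in the minimization defining $E_n(a)$, and that no admissible wedge can do better. For admissibility, I would first check that $\mathrm{EW}(\tilde\eth a)$ is a quantum-normal wedge: by Definition~\ref{defn:bdyqew} it is quantum-extremal, $\Theta^+=\Theta^-=0$ on all of its edge, which is in particular the (marginal) quantum-normal condition $\Theta^\pm\geq 0$. Next, it contains $a$ by hypothesis, and it has conformal edge $\tilde\eth a$, which equals the conformal edge of $a$ by construction. Hence $\mathrm{EW}(\tilde\eth a)$ lies in the class of wedges over which $E_n(a)$ is minimized, so $\S[E_n(a)]\leq\S[\mathrm{EW}(\tilde\eth a)]$.

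For the reverse direction, I would argue that $E_n(a)$ is itself a quantum-extremal wedge with conformal edge $\tilde\eth a$, so that the minimality in Definition~\ref{defn:bdyqew} gives $\S[\mathrm{EW}(\tilde\eth a)]\leq\S[E_n(a)]$, and then invoke the assumed uniqueness in Definition~\ref{qew} (together with the analogous uniqueness for $\mathrm{EW}$) to upgrade the equality of entropies to equality of the wedges. The conformal-edge condition is immediate; the content is showing $E_n(a)$ is quantum-extremal. By Lemma~\ref{qeqe}, $E_n(a)$ is quantum normal everywhere, and automatically quantum extremal at every point of $\eth[E_n(a)]$ lying in $a'$. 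So it remains to rule out the marginal-trapped and marginal-antitrapped portions of $\eth[E_n(a)]$ on $H^\pm(a')$ — i.e.\ to show that under the hypothesis $a\subset\mathrm{EW}(\tilde\eth a)$ the edge of $E_n(a)$ cannot touch $\eth a$ nor the Cauchy horizons of $a'$.

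The key step is therefore: if $a\subset\mathrm{EW}(\tilde\eth a)$, then $E_n(a)=\mathrm{EW}(\tilde\eth a)$ rather than something strictly smaller with a partly non-extremal edge. I would establish this by a "take the intersection" argument in the spirit of the nesting proof of Theorem~\ref{thm:ewnstatic}. Suppose $E_n(a)\neq\mathrm{EW}(\tilde\eth a)$. Consider $w\equiv E_n(a)\cap\mathrm{EW}(\tilde\eth a)$. Since $a\subset E_n(a)$ and $a\subset\mathrm{EW}(\tilde\eth a)$, we have $a\subset w$, and $w$ has the same conformal edge $\tilde\eth a$. By Lemma~\ref{qnormalwi} the intersection of two quantum-normal wedges is quantum-normal, so $w$ is an admissible competitor in Definition~\ref{qew}; minimality of $E_n(a)$ forces $\S[E_n(a)]\leq\S[w]$. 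On the other hand, a rearrangement-of-surfaces inequality $\A[E_n(a)]+\A[\mathrm{EW}(\tilde\eth a)]\geq\A[w]+\A[E_n(a)\Cup\mathrm{EW}(\tilde\eth a)]$ combined with strong subadditivity of the von Neumann entropy gives $\S[E_n(a)]+\S[\mathrm{EW}(\tilde\eth a)]\geq\S[w]+\S[E_n(a)\Cup\mathrm{EW}(\tilde\eth a)]$; and since $E_n(a)\Cup\mathrm{EW}(\tilde\eth a)$ is a wedge with conformal edge $\tilde\eth a$ containing $\mathrm{EW}(\tilde\eth a)$, while $E_n(a)\Cup\mathrm{EW}(\tilde\eth a)$ need not be quantum-extremal — here one must instead compare it against $\mathrm{EW}$ using a maximin or quantum-extremal-surface characterization, or note that $\mathrm{EW}(\tilde\eth a)$ is the \emph{outermost} among quantum-extremal wedges with that conformal edge contained in the domain relevant here. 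Chaining the inequalities shows $\S[w]=\S[E_n(a)]$, and uniqueness gives $w=E_n(a)$, i.e.\ $E_n(a)\subset\mathrm{EW}(\tilde\eth a)$; then the same argument with the roles adjusted (or the outermost-QES property) upgrades this to equality.

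\textbf{Main obstacle.} The delicate point is the last comparison: unlike the purely static setting, $E_n(a)\Cup\mathrm{EW}(\tilde\eth a)$ is \emph{not} guaranteed to be quantum-normal (the wedge union of two normal wedges can be anti-normal on the new Cauchy-horizon pieces), so it is not automatically an admissible competitor for $E_n(a)$, and one cannot close the argument symmetrically as in Theorem~\ref{thm:ewnstatic}. I expect the clean route is to use the characterization of $\mathrm{EW}(\tilde\eth a)$ as the outermost quantum-extremal wedge anchored on $\tilde\eth a$ (so that any quantum-normal wedge containing $a$ and anchored on $\tilde\eth a$ is contained in $\mathrm{EW}(\tilde\eth a)$, by a standard argument that sweeping a quantum-normal surface inward from infinity first meets a quantum-extremal one), which immediately gives $E_n(a)\subset\mathrm{EW}(\tilde\eth a)$; combined with $\S[E_n(a)]\leq\S[\mathrm{EW}(\tilde\eth a)]$ and the minimality/uniqueness of $\mathrm{EW}$ among extremal wedges, equality follows. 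Verifying that this outermost-QES statement is available in the quantum setting at the level of rigor used elsewhere in the paper is the part that needs care.
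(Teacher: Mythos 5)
Your first step --- checking that $\mathrm{EW}(\tilde\eth a)$ is an admissible competitor in Definition~\ref{qew}, so that $\S[E_n(a)]\leq \S[\mathrm{EW}(\tilde\eth a)]$ --- is correct and is implicitly part of the paper's argument. But the reverse direction is where your proposal has a genuine gap, and you have correctly diagnosed it yourself: the wedge union $E_n(a)\Cup \mathrm{EW}(\tilde\eth a)$ need not be quantum-normal or quantum-extremal, so the intersection/union argument modeled on the static nesting proof does not close. Your fallback --- an ``outermost-QES'' characterization asserting that every quantum-normal wedge anchored on $\tilde\eth a$ is contained in $\mathrm{EW}(\tilde\eth a)$ --- is not available: by Definition~\ref{defn:bdyqew}, $\mathrm{EW}(\tilde\eth a)$ is the \emph{smallest-generalized-entropy} quantum-extremal wedge, not the outermost one. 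When several quantum-extremal surfaces are anchored on $\tilde\eth a$, the minimal one may lie closer to the boundary than a non-minimal one; the wedge of the deeper, non-minimal QES is then quantum-normal (indeed extremal) with the right conformal edge but is not contained in $\mathrm{EW}(\tilde\eth a)$. So that route cannot give $E_n(a)\subset \mathrm{EW}(\tilde\eth a)$ in general.

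The idea you are missing is a constraint-relaxation argument, which is how the paper closes the proof in two lines. Drop the requirement that the competitor wedge contain $a$, and minimize $\S$ over \emph{all} quantum-normal wedges with conformal edge $\tilde\eth a$. For this unconstrained problem, the variational argument in the proof of Lemma~\ref{qeqe} applies at every edge point (there is no $\eth a$ or $H^\pm(a')$ to obstruct inward deformations), so the minimizer must be quantum-extremal everywhere; being the $\S$-minimizer over a class that includes all quantum-extremal wedges with that conformal edge, it is precisely $\mathrm{EW}(\tilde\eth a)$. Now reimpose the constraint: by hypothesis $a\subset \mathrm{EW}(\tilde\eth a)$, so the unconstrained minimizer already lies in the constrained class, and restricting a minimization to a subset containing the global minimizer does not change the answer. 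Hence $E_n(a)=\mathrm{EW}(\tilde\eth a)$, with no need to analyze $E_n(a)\Cup \mathrm{EW}(\tilde\eth a)$ or to establish any outermost property.
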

\begin{proof}
By the arguments in the proof of Lemma \ref{qeqe}, the smallest-$\S$ quantum-normal wedge with conformal edge $\tilde\eth a$ is necessarily extremal, and hence is given by $\mathrm{EW}(\tilde\eth a)$. By assumption $a \subset \mathrm{EW}(\tilde\eth a)$, so this continues to be true when we restrict to normal wedges containing $a$.
\end{proof}

\begin{lem}[Reduction of $E_n$ to $E$] \label{lem:entoe}
Let $\Sigma_0$ be a time-reflection symmetric Cauchy surface, and let $a$ be a wedge with $\eth a\subset \Sigma_0$. Then $E_n(a)$ reduces to the static entanglement wedge $E(a)$:\footnote{In Sec.~\ref{sec-prescription}, $\partial$ denotes the boundary in the topology of $\Sigma_0$, and we revert to this usage in this lemma and its proof. Elsewhere in the present section, $\partial$ denotes the boundary in $M$.}
\begin{equation}
    \eth E_n(a)= \partial E(a)~.
\end{equation}
\end{lem}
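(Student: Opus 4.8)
The plan is to show that the two minimization problems defining $E(a)$ and $E_n(a)$ have the same solution by exploiting the $\mathbb{Z}_2$ time-reflection symmetry. First I would observe that any wedge $c \subset M$ whose edge $\eth c$ lies in the time-symmetric Cauchy surface $\Sigma_0$ has, at every $p \in \eth c$, a past and future quantum expansion that are exchanged by the $\mathbb{Z}_2$ action; in particular the classical expansion $\theta$ of such a surface vanishes (a time-symmetric slice is totally geodesic, so its extrinsic curvature is zero), and the nonlocal term in Eq.~\eqref{eq:qexpansion} satisfies $\Theta^+(c,p) = \Theta^-(c,p)$ because $\rho_c$ is invariant under the reflection. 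Hence such a wedge is automatically quantum-normal, and moreover it is quantum-normal iff the common value $\Theta^\pm(c,p) \geq 0$, which is the condition that deforming $\eth c$ outward does not decrease $\S$.

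The second step is to show the static minimizer $E(a)$, viewed as a wedge in $M$ by taking its Cauchy development, is quantum-normal and therefore an admissible competitor in Def.~\ref{qew}. Its edge $\partial E(a)$ lies in $\Sigma_0$ by construction, so by the first step it suffices to check $\Theta^\pm[E(a),p] \geq 0$ for all $p \in \partial E(a)$. But this is exactly the statement that $\S$ does not decrease under an outward deformation of $E(a)$ within $\Sigma_0$ — which would follow from the minimality of $E(a)$ among wedges in $\Sigma_0$ containing $a$ with fixed conformal boundary, provided the deformed wedge still contains $a$. There is a subtlety at points $p \in \partial E(a)$ that also lie on $\eth a$: there an inward deformation is forbidden (it would leave $a$), so minimization only gives the one-sided statement, but outward deformations are still allowed and give $\Theta^\pm \geq 0$. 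At points of $\partial E(a)$ not on $\eth a$, minimality in $\Sigma_0$ gives $\Theta^\pm[E(a),p] = 0$ (quantum extremal), paralleling the first bullet-point structure of Lemma~\ref{qeqe}. Either way $E(a)$ is quantum-normal, so $\S[E_n(a)] \leq \S[E(a)]$.

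The third step is the reverse inequality. Given the covariant minimizer $E_n(a)$, I would argue that we may assume its edge lies in $\Sigma_0$: by time-reflection symmetry, the reflected wedge $\overline{E_n(a)}$ is also quantum-normal, contains $a$ (since $\eth a \subset \Sigma_0$ is fixed and $a$ — taken as the domain of dependence — is reflection-invariant), and has the same $\S$; by the assumed uniqueness in Def.~\ref{qew} it equals $E_n(a)$, so $E_n(a)$ is reflection-symmetric. A reflection-symmetric wedge is the Cauchy development of its intersection with $\Sigma_0$, and that intersection is a static wedge containing $a$ with conformal boundary $\tilde\partial a$; hence $\S[E(a)] \leq \S[E_n(a)]$. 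Combining with step two and invoking uniqueness gives $\eth E_n(a) = \partial E(a)$ as subsets of $\Sigma_0$.

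The main obstacle I anticipate is the step-three claim that $E_n(a)$ can be taken reflection-symmetric. Uniqueness of $E_n(a)$ was only \emph{assumed} for convenience, and one must be careful that "$\overline{E_n(a)}$ contains $a$" genuinely holds — this needs $a$ itself (as a wedge, i.e. $a = a''$) to be reflection-invariant, which is automatic only if $\eth a \subset \Sigma_0$ really does pin down $a$ up to the choice of side, and the side is reflection-symmetric data. A secondary subtlety is the marginal/extremal case analysis at $\partial E(a) \cap \eth a$ versus $\partial E(a) \setminus \eth a$ in step two, and checking that outward deformations used there stay inside the domain where Lemma~\ref{subsetexplem} and the minimality argument apply; these are the same kinds of marginal-case bookkeeping as in Lemma~\ref{qeqe}, so I expect them to go through without new ideas.
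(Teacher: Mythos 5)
Your overall strategy --- show the static minimizer is quantum-normal and hence an admissible competitor for $E_n$, then show the covariant minimizer can be taken to have its edge on $\Sigma_0$ and hence yields an admissible competitor for $E$ --- is the same as the paper's, and your steps 2 and 3 reach the right conclusions. The genuine difference is in step 3: you conclude $T[E_n(a)]=E_n(a)$ by noting the reflected wedge is an equally good minimizer and invoking the uniqueness assumed in Def.~\ref{qew}. The paper instead forms the intersection $E_n(a)\cap T[E_n(a)]$, which is a wedge, is quantum-normal by Lemma~\ref{qnormalwi}, contains $a$, and is manifestly reflection-symmetric; quantum normalcy of $E_n(a)$ together with Conj.~\ref{conj-qfc} then shows this intersection does not have larger $\S$, contradicting minimality unless $E_n(a)=T[E_n(a)]$. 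The paper's route is where the QFC enters essentially and is more robust: it does not need the reflected wedge to be recognized as ``the same'' minimizer a priori, and it would survive a weakening of the uniqueness assumption. Your route is legitimate within the paper's stated hypotheses (uniqueness is assumed), but it leans on exactly the assumption the paper's argument is designed to avoid leaning on; you correctly identified this as the weak point.

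One concrete error in step 1: the classical expansion of a surface lying in a totally geodesic slice does \emph{not} vanish. Vanishing extrinsic curvature of $\Sigma_0$ in $M$ only kills the timelike contribution, so $\theta^+=\theta^-=k$, where $k$ is the mean curvature of $\partial c$ within $\Sigma_0$ (a round sphere in a static slice of Minkowski has $\theta^\pm=(D-2)/r\neq 0$). Consequently the assertion that such a wedge is ``automatically quantum-normal'' is false --- and it contradicts your own subsequent (correct) statement that normalcy is equivalent to $\Theta^+=\Theta^-\geq 0$, i.e.\ to $\S$ being non-decreasing under outward deformation within $\Sigma_0$. Fortunately only the correct half is load-bearing: the reflection symmetry gives $\Theta^+(c,p)=\Theta^-(c,p)$, so the only options are quantum-normal or quantum-antinormal, and antinormality is excluded by the minimality of the static $E$ under outward deformations, exactly as in the paper. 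Delete the ``totally geodesic, hence $\theta=0$'' clause and step 1 is fine. The remaining bookkeeping (achronality of $\eth E_n(a)$ so that reflection-invariance of the edge forces it pointwise into $\Sigma_0$, and the marginal-case analysis at $\partial E(a)\cap\eth a$) is at the same level of rigor as the paper itself.
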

\begin{proof}
Let $T$ be the time-reflection operator.
The set $E(a) \cap T[E(a)]$ is a wedge by Remark~\ref{wilem}, normal by Lemma \ref{qnormalwi}, and it contains $a$. It is also time-reflection symmetric, so its edge lies on $\Sigma_0$. This conflicts with Conj.~\ref{conj-qfc} unless $E(a)=T[E(a)]$. Hence $\eth E_n(a) \subset \Sigma_0$. Moreover, $\partial E(a\cap \Sigma_0)$ must be pointwise quantum-normal or antinormal by time-reflection symmetry. If it were anti-normal at some point $p$, we could decrease $\S$ by deforming $\partial E(a\cap\Sigma_0)$ outward on $\Sigma_0$ at $p$, contradicting its definition.
\end{proof}

\begin{thm}[No Cloning for $E_n$]%%%\label{qnc}
Let $a,b$ be wedges that satisfy
\begin{equation}\label{qexclusions}
    a\subset E_n(b)'~, ~~ b\subset E_n(a)'~.
\end{equation}
Then
\begin{equation}\label{qnocloneq}
    E_n(a)\subset E_n(b)'~,
\end{equation}
or equivalently, $E_n(b)\subset E_n(a)'$.
\end{thm}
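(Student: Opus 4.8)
The plan is to mirror the proof of the static no-cloning theorem (Theorem~\ref{staticnocloning}), replacing set operations by wedge operations and supplying the extra input needed to keep the competitor wedges quantum-normal. Write $A\equiv E_n(a)$, $B\equiv E_n(b)$. Since the spacelike relation is symmetric, the hypotheses \eqref{qexclusions} read $A\subset b'$ and $B\subset a'$; together with $a\subset A$ and $b\subset B$ this gives $a\subset A\cap B'$ and $b\subset B\cap A'$. The competitors are the wedges (Remark~\ref{wilem}) $\hat A\equiv A\cap B'$ and $\hat B\equiv B\cap A'$. Since $a\subset\hat A\subset A$ with $\tilde\eth a=\tilde\eth A$, and since $B'\supset a$ contains a neighborhood of the conformal edge, one has $\tilde\eth\hat A=\tilde\eth a$, and symmetrically $\tilde\eth\hat B=\tilde\eth b$; so $\hat A$ and $\hat B$ are admissible in Definition~\ref{qew} as candidates for $E_n(a)$ and $E_n(b)$, \emph{provided} they are quantum-normal.

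Establishing quantum-normalcy is the crux. Decompose $\eth\hat A=\eth(A\cap B')$ as in Fig.~\ref{fig-wedges}: a portion on $\eth A$, a portion on $\eth B'=\eth B$, and two creases lying on Cauchy horizons of $A$ and $B'$, e.g.\ $H^+(A)\cap H^-(B')$. On the $\eth A$ portion, $\hat A$ locally coincides with the quantum-normal wedge $A$. On the portion $A\cap\eth B$: because $A\subset b'$, it misses $\eth b$ and $H^\pm(b')$ (which lie on $\partial b'$), so by Lemma~\ref{qeqe} it lies in the quantum-extremal locus $b'\cap\eth B$ of $\eth B$; there $\Theta^\pm(B)=0$, hence $\S$ is stationary under all normal deformations, so $\Theta^\pm(B')=0$ too, and Lemma~\ref{subsetexplem} gives $\Theta^\pm(\hat A)\geq 0$ since $\hat A\subset B'$ with edges agreeing nearby. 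On a crease $p\in H^+(A)\cap H^-(B')$ one reruns the proof of Lemma~\ref{qnormalwi}: deform $A$ along $H^+(A)$ and $B'$ along $H^-(B')$ until their edges meet the crease near $p$, and invoke the Quantum Focussing Conjecture (Conjecture~\ref{conj-qfc}) with Lemma~\ref{subsetexplem}; one uses quantum-normalcy of $A$ and the fact that the relevant null-separated footpoints on $\eth B$ again lie in its quantum-extremal locus, so $B'$ is quantum-normal there. The remaining crease and the wedge $\hat B$ are handled identically, with $B\subset a'$ in place of $A\subset b'$.

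Once $\hat A,\hat B$ are known to be admissible, the argument closes as in the static case. Rearranging surface components in the edge decompositions --- the Cauchy-horizon creases only decreasing the total, as in the parenthetical remarks of Theorem~\ref{staticnocloning} --- gives $\A[\eth\hat A]+\A[\eth\hat B]\leq\A[\eth A]+\A[\eth B]$, and strong subadditivity of the von Neumann entropy upgrades this to $\S[\hat A]+\S[\hat B]\leq\S[A]+\S[B]$. But minimality in Definition~\ref{qew} gives $\S[\hat A]\geq\S[A]$ and $\S[\hat B]\geq\S[B]$. Both inequalities are thus saturated, and uniqueness forces $\hat A=E_n(a)=A$, i.e.\ $A\subset B'$. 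This is \eqref{qnocloneq}, equivalent by symmetry of the spacelike relation to $B\subset A'$.

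The hardest step is the quantum-normalcy of the Cauchy-horizon creases. It leans on the (still conjectural) diagonal Quantum Focussing Conjecture, and, more delicately, it requires showing that the null generators of those creases really do land on the quantum-extremal portion of $\eth B$ rather than on its marginally trapped or antitrapped portions --- this is where the hypotheses $A\subset b'$, $B\subset a'$ must be used with care. A secondary obstacle, with no static analogue, is that the strong-subadditivity step implicitly assumes a Cauchy slice containing $\eth A$, $\eth B$ and both competitor edges, so that causal overlaps such as $A\cap B$ become ordinary spatial subregions; as with the existence and uniqueness of $E_n$, this is cleanest to take as an assumption.
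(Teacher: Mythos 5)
Your construction of the competitor wedges $\hat A=E_n(a)\cap E_n(b)'$ and $\hat B=E_n(a)'\cap E_n(b)$, the verification that they contain $a$ and $b$, and the quantum-normalcy argument (edge decomposition as in Lemma~\ref{qnormalwi}, Lemma~\ref{qeqe} to place $A\cap\eth B$ on the extremal locus of $\eth B$ because $A\subset b'$, and Conjecture~\ref{conj-qfc} for the creases) all track the paper's proof closely. One small correction there: the paper does not need the crease footpoints to land on the fully quantum-extremal portion of $\eth E_n(b)$; it suffices that they lie on $H^\pm(b')$, where Lemma~\ref{qeqe} guarantees vanishing of the \emph{one} expansion that actually enters the argument (the other case is excluded because it would produce a causal curve from $\eth E_n(a)$ to $b$, violating Eq.~\eqref{qexclusions}). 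So the delicacy you flag is resolved more cheaply than you suggest.

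The genuine gap is the entropy-comparison step. Your inequality $\A[\eth\hat A]+\A[\eth\hat B]\leq\A[\eth A]+\A[\eth B]$ is \emph{not} a rearrangement of surface components in the time-dependent setting: the Cauchy-horizon creases such as $H^+(A)\cap H^-(B')$ are new surfaces, not subsets of $\eth A\cup\eth B$, and the portions of $\eth A$ causally related to $\eth B$ (lying in neither $B$ nor $B'$) are discarded. Whether the added crease area is smaller than the discarded area is a focusing question, not a bookkeeping one, and asserting that the creases ``only decrease the total'' begs it. Your fallback --- postulating a single Cauchy slice containing $\eth A$, $\eth B$ and both competitor edges --- is a strong extra hypothesis the theorem does not grant and would essentially collapse the problem back to the static case. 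The paper routes around this by introducing the flattening operation $c_{/d}\equiv(c\cap d)\Cup(c\cap d')$ of Eq.~\eqref{wp} applied to the \emph{complements}: the marginality conditions of Lemma~\ref{qeqe} on $H^\pm(a')$ together with Conjecture~\ref{conj-qfc} give $\S[E_n(a)'_{/E_n(b)}]\leq\S[E_n(a)]$ and its mirror, and only then does strong subadditivity (in its weak-monotonicity form) deliver Eq.~\eqref{qseaebp}. That intermediate step is the substantive content your proposal is missing; without it, or some equivalent control of the crease contributions, the inequality $\S[\hat A]+\S[\hat B]\leq\S[A]+\S[B]$ is unsupported.
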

\begin{proof}
For arbitrary wedges $c$ and $d$, let
\begin{equation}\label{wp} 
    c_{/d}\equiv (c\cap d)\Cup(c\cap d')~.
\end{equation}
Lemma~\ref{qeqe} and Conj.~\ref{conj-qfc} can be shown to imply
\begin{equation}
    \S[E_n(a)'_{/E_n(b)}]\leq \S[E_n(a)]~,\,\,\,\mathrm{and}\,\,\,\S[E_n(b)'_{/E_n(a)}]\leq \S[E_n(b)]~.
\end{equation}
We add these inequalities and use strong subadditivity of the von Neumann entropy to obtain
\begin{equation}\label{qseaebp}
   \S[E_n(a)\cap E_n(b)'] + \S[E_n(a)'\cap E_n(b)] \leq  \S[E_n(a)]+\S[E_n(b)]~.
\end{equation}
We will now show that $E_n(a)\cap E_n(b)'$ is quantum-normal. Let us begin by substituting $a\to E_n(a)$ and $b\to E_n(b)'$ in the proof of Lemma~\ref{qnormalwi}. The Lemma assumed that $a$ and $b$ are normal, whereas now $E_n(b)'$ is quantum-antinormal. However, our assumption, Eq.~\eqref{qexclusions}, implies that $\Theta^+[E_n(b)',p]=0$ or $\Theta^-[E_n(b)',p]=0$ at any points $p$ where these expansions actually appear in the proof. Indeed, since $E_n(a)$ is spacelike to $b$, $\Theta^\pm[E_n(b)',p]=0$ for $p\in E_n(a)\cap \eth E_n(b)$ by Lemma~\ref{qeqe}. If $p$ is the endpoint of a geodesic generator of $H^-[E_n(b)']$ that intersects $H^+[E_n(a)]$, then $\Theta^-[E_n(b)',p]=0$ by Lemma~\ref{qeqe}, or else there would be a causal curve from $\eth E_n(a)$ to $b$, in conflict with Eq.~\eqref{qexclusions}. 

By the same reasoning $E_n(a)'\cap E_n(b)$ is also quantum-normal. Moreover, $a\subset E_n(a)\cap E_n(b)'$ and $b\subset E_n(a)'\cap E_n(b)$ by Eq.~\eqref{qexclusions}. By Def.~\ref{qew}, $E_n(a)$ and $E_n(b)$ are the unique smallest-$\S$ quantum-normal wedges containing, respectively, $a$ and $b$. Eq.~\eqref{qseaebp} thus implies $E_n(a)\cap E_n(b)'=E_n(a)$, which is equivalent to Eq.~\eqref{qnocloneq}.
\end{proof}

\subsection{Discussion}
\label{sec-covariantdiscussion}

$E_n$ does not satisfy subadditivity or nesting. This suggests that $E_n$ is not quite the correct generalization of $E$. A counterexample to both properties is shown in Fig.~\ref{ShellCE}. A purely classical counterexample to strong subadditivity is furnished by taking $b$ to be a rectangle of length and width $\Delta y\gg \tau \gg \Delta x\gg \ell_P$ at $t=0$ in two-dimensional Minkowski space, and $a$ and $c$ to be identical rectangles obtained by moving $b$ by $1+2\Delta x$ in the $\pm x$ direction and by $\tau$ in the $t$ direction.
\begin{figure}[t]
\begin{center}
  \includegraphics[width = 0.6\linewidth]{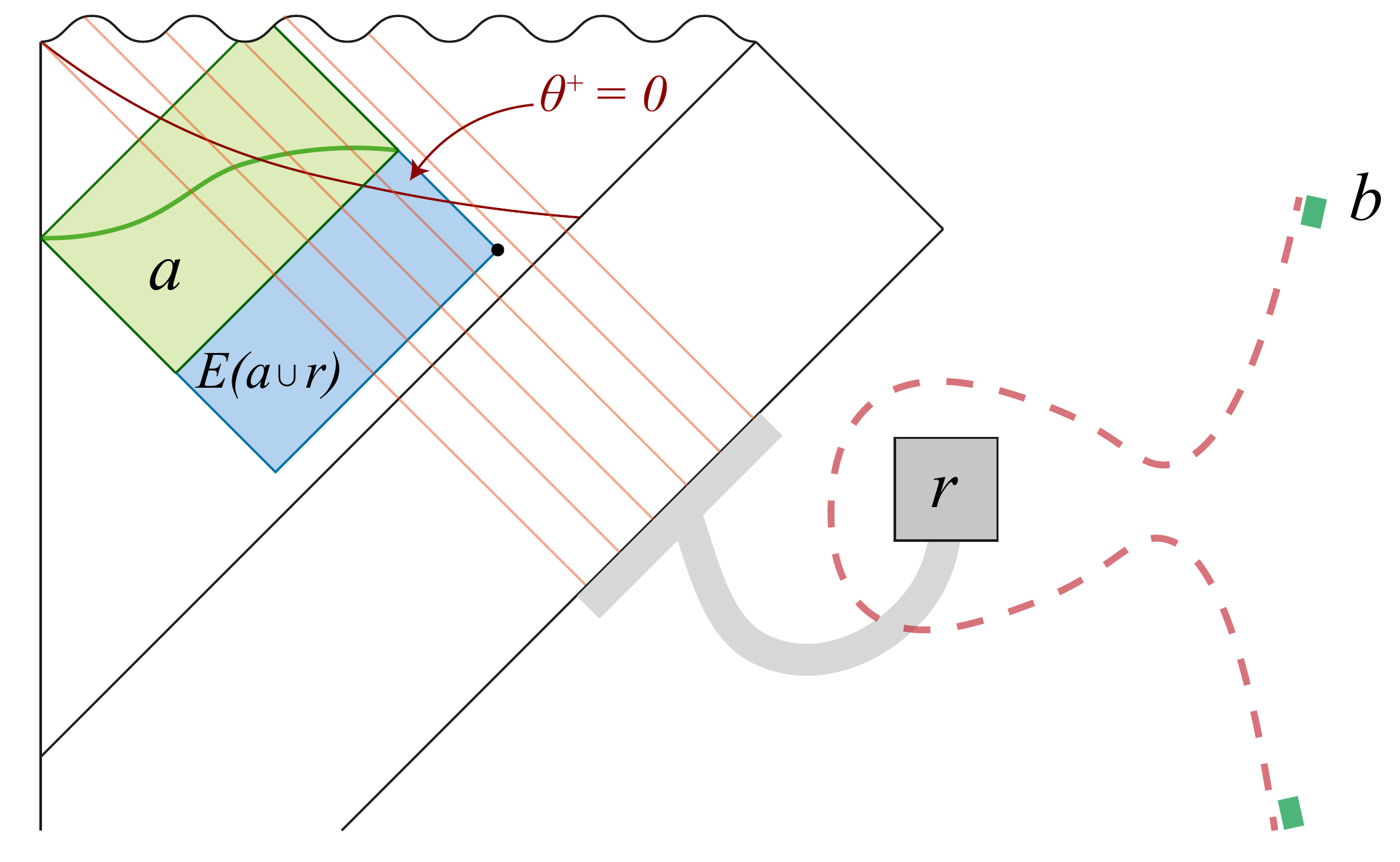} 
\end{center}
\caption{Collapse of a null shell purified by a distant reference system $r$. The wedge $a$ is marginally quantum trapped, so $a=E_n(a)$. One can arrange that $\S[E_n(a)]+\S[E_n (r)]\gg \S[E_n(a\cup r)]$. If $r$ is the Python's Lunch~\cite{Brown:2019rox} (dashed red) of an asymptotic region $b$, one can arrange that $E_n(a\cup b)\not\supset E_n(b)$. Hence $E_n$ satisfies neither subadditivity nor nesting.}
\label{ShellCE}
\end{figure}

Ideally, we would like to find a prescription $e(a)$ that satisfies all of the following properties:
\begin{enumerate}
    \item $e$ reduces to the time-reflection symmetric prescription given in Sec.~\ref{sec-prescription}: $e(a) = E(a)$.
    \item $e$ reduces to the QES prescription for the entanglement wedge of boundary regions: $a\subset EW(\deltabar a)\implies e(a) = EW(\deltabar a)$.
    \item $e$ is well-defined.
    \item $e$ satisfies no-cloning, strong subadditivity, and nesting, in the sense shown for $E$ in Sec.~\ref{sec-prescription}.
\end{enumerate}

We have seen that $E_n$ satisfies the first three requirements as well as no-cloning. We are aware of alternate candidate prescriptions satisfying other subsets of these requirements, but of no \emph{single} prescription that satisfies all. For example, one can ensure nesting in addition to the requirements satisfied by $E_n$, by taking the wedge union $\Cup_{b \subset a} E_n(b)$ over all subwedges $b \subset a$. However, this definition is somewhat artificial, and it still does not obey strong subadditivity.

Ref.~\cite{Grado-White:2020wlb} proposed a restricted maximin prescription for the entanglement wedge of a cut-off boundary region. This prescription can be converted into a proposal for the generalized entanglement wedge of a bulk region, which (like $E_n$) satisfies the first three requirements above. In addition, it obeys strong subadditivity; however, it violates nesting.\footnote{We also expect no-cloning to hold but we have not verified this.} Explicitly, let $\eth a$ be the cut-off surface denoted $\gamma$ in Ref.~\cite{Grado-White:2020wlb}, and let $a$ be its exterior, \emph{i.e.}, the portion discarded by the cut-off.
%if one replaces the cut-off spacetime by $a'$ and the surface $\gamma$ by the edge $\eth a$, then
Choosing $A=\gamma$ in their notation, the prescription of Ref.~\cite{Grado-White:2020wlb} defines a wedge $\mathcal{E}(a) \subset a'$ bounded by $\eth a$ and its restricted maximin surface. The wedge union $e(a) = a \,\Cup \,\mathcal{E}(a)$ is then a candidate for the generalized entanglement wedge. The proof of strong subadditivity from Ref.~\cite{Grado-White:2020wlb} directly implies strong subadditivity for the generalized entanglement wedges, so long as $e(a \Cup b) \cap c = e(b \Cup c) \cap a = \varnothing$. (This ensures that the maximin surfaces lie in $a'\cap b' \cap c'$, the spacetime region that survives all three cutoffs. From an information-theoretic perspective, it is roughly analogous to the conditions in Theorem \ref{staticnocloning} which ensure that $a$, $b$ and $c$ are genuinely independent degrees of freedom.) %Moreover, the first three requirements above are easily shown to be satisfied. 
However, the statement of nesting proved in Ref.~\cite{Grado-White:2020wlb}, while appropriate in that context, is very different from the general statement $e(a) \subset e(b)$ that must hold for $e$ to be an entanglement wedge of a bulk region. In fact, it is easy to see that the latter statement fails, since $e(a) = a$ for any region $a$ with edge $\eth a$ piecewise lightlike.

However, it may not be appropriate to seek a single definition of the entanglement wedge of bulk regions. In recent work, it was shown that the QES prescription for the entanglement wedge $EW$ of boundary regions must be refined, by replacing the von Neumann entropy with certain one-shot entropies~\cite{Akers:2020pmf}. For static spacetimes, instead of a single entanglement wedge, the refined prescription defines \emph{two} wedges, called the max-EW and the min-EW. They describe the bulk regions that are respectively fully encoded in, and partially influenced by, the boundary region. This refinement has been generalized to the full time-dependent setting~\cite{Akers:2022toappear}. For a boundary input region, the difference between its max- and min-EW is always a quantum effect. However, if one tries to generalize this prescription further to allow for bulk input regions, it appears that in time-dependent settings, the max- and min-EW may differ even at the classical level. We are hopeful that this may resolve the difficulties that we encountered in this section when trying to define a single generalized entanglement wedge with all our desired properties. We leave the details of any such prescription to future work.

\subsection*{Acknowledgements}
We thank Chris Akers, Ben Freivogel, Adam Levine, and Juan Maldacena for valuable discussions. This work was supported in part by the Berkeley Center for Theoretical Physics; by the Department of Energy, Office of Science, Office of High Energy Physics under QuantISED Award DE-SC0019380 and under contract DE-AC02-05CH11231. RB was supported by the National Science Foundation under Award Number 2112880. GP was supported by the Simons Foundation through the ``It from Qubit'' program; by AFOSR award FA9550-22-1-0098; DOE award DE-FOA-0002563; and by an IBM Einstein Fellowship at the Institute for Advanced Study.

\bibliographystyle{JHEP}
\bibliography{bibliography}

\end{document}